\theoremstyle{plain}
\newtheorem{theorem}{Theorem}[section]
\newtheorem{proposition}[theorem]{Proposition}
\theoremstyle{definition}
\theoremstyle{remark}
\DeclareMathOperator*{\argmax}{arg\,max}
\newcommand{\sign}{\text{sgn}}
\def\BibTeX{{\rm B\kern-.05em{\sc i\kern-.025em b}\kern-.08em
    T\kern-.1667em\lower.7ex\hbox{E}\kern-.125emX}}
\begin{document}
\title{\vspace{6.65mm} Motion Planning for Identification of Linear Classifiers}
\author{Aneesh Raghavan and Karl Henrik Johansson
\thanks{*Research supported by the Swedish Research Council (VR), Swedish Foundation for Strategic Research (SSF),  and the Knut and Alice Wallenberg Foundation. The authors are with the Division of Decision and Control Systems, 
Royal Institute of Technology,  KTH,  Stockholm. 
Email: {\tt\small aneesh@kth.se,  kallej@kth.se}}%
}
\maketitle
\begin{abstract}
A given region in 2-D Euclidean space is divided by a unknown linear classifier in to two sets each carrying a label. The objective of an agent with known dynamics traversing the region is to identify the true classifier while paying a control cost across its trajectory. We consider two scenarios: (i) the agent is able to measure the true label perfectly; (ii) the observed label is the true label multiplied by noise. We present the following: (i)  the classifier identification problem formulated as a control problem; (ii) geometric interpretation of the control problem resulting in one step modified control problems; (iii) control algorithms that result in data sets which are used to identify the true classifier with accuracy; (iv) convergence of estimated classifier to the true classifier when the observed label is not corrupted by noise; (iv) numerical example demonstrating the utility of the control algorithms.  
\end{abstract}
\section{Introduction}\label{section 1}
\subsection{Motivation}\label{subsection 1.1}
Duality between control and learning (in a broad sense, including estimation and inference problems) has been well studied in the literature. In \cite{todorov2008general}, duality between estimation and control is studied for general stochastic control problems. In \cite{ishii2002control}, exploration vs exploitation has been studied through the control of a meta- parameter in reinforcement learning problems. \cite{alpcan2015information} studies dual control problems where knowledge gained through control actions is  explicitly defined. In \cite{klenske2016dual}, dual control techniques have been applied to approximate the intractable aspects of Bayesian RL, leading to structured exploration strategies that differ from standard RL. In \cite{mesbah2018stochastic}, stochastic model predictive control is presented in the dual control paradigm. More recently dual control has been applied to active uncertainty learning in human robot interaction, \cite{hu2022active}. In all these problems there, is uncertainty in the model or the cost function that is being actively learnt through control actions. 

Parallels between model predictive control and algorithms in A.I have been drawn, \cite{lecun2022path}. Learning theory has been extensively applied to control problems; special neural networks and deep networks have been used extensively in system identification and  to approximate solutions to control problems, \cite{becerikli2003intelligent}, \cite{sabouri2017neural}, \cite{sanchez2018real}. Systems and control theory however has not been applied to its full potential to learning theory. A hypothesis that is being explored currently is, learning problems could benefit from being formulated as control problems by channelizing feedback to reduce the quantity of data required to solve the learning problem efficiently and accurately.

Adaptive sampling is closely related to the field of active learning, however the former operates in the context supervised learning while the latter is associated with semi-supervised learning. Adaptive sampling for classification has been studied in \cite{djouzi2022new}, \cite{singh2017sequential}, \cite{shekhar2021adaptive},  where sequential sampling algorithms are presented to enhance the learning process.  In \cite{ding2021adaptive}, adaptive sampling has been applied to hyperspectral image classification leading to improvement from state of the art. Learning unknown environment is a crucial part of marine robotics.  Adaptive sampling methods have been used to survey and learn about algal bloom,  water quality models, etc in \cite{zhang2007adaptive}, \cite{bernstein2013learning}, \cite{stankiewicz2021adaptive}, and \cite{fossum2020compact}.  

Given the above context, the problem that we consider is the identification of certain aspects of an agents environment that is unkown. Unlike traditional dual control which deals with uncertainty in the system or the cost functions and learning the same, we consider learning of the environment. The problem considered  has potential application in marine robotics as well, as described above. In our previous work, \cite{raghavan2023motion} we considered path planning for identification of functions in an agents surroundings. 
\subsection{Problem Considered}\label{subsection 1.2}
The problem considered is as follows. A given region in two dimensional space is divided into two regions by a straight line, Figure \ref{Figure 1}. The true classifier divides the state space into two sets, $\mathcal{X}$, where the true label is $1$ and, $\mathcal{X}^c$, where the true label is $-1$. Every point in the region ``orange" carries the label $1$ while every point in the alternate region ``blue" carries the label $-1$. Every point on the straight line that divides the region carries the label $0$. Four points, $p_{1}, p_{2}, p_{3}$, and $p_{4}$, with their true labels are given  by an oracle. In Figure \ref{Figure 1}, the true labels of $p_{1}$ and $p_{4}$ are $1$ while that of $p_{2}$ and $p_{3}$ are $-1$. The true classifier is parameterized by its slope, $\rho^*$, and intercept, $c^*$. An agent with known dynamics traverses the region by paying a control cost. We consider two measurement models: (i) the agent is able to measure the true label perfectly (deterministic) (ii) the measurement gets corrupted by noise; the measured label is the true label flipped ($1$ to $-1$ and $-1$ to $1$) with a certain probability (stochastic). Given the measurement model, the objective of the agent starting at the point $p_{1}$ is to follow a trajectory during which it collects $m$ data points  which are optimal for the identification of the classifier while simultaneously minimizing its control costs.

One possible path that could be taken by the agent is depicted in Figure \ref{Figure 1}. The agent collects $6$ data points, apart from the $4$ given, $3$ of each label. The $10$ data points would subsequently be used to estimate the classifier. The four initial points provided by the oracle are assumed to be ``far" apart from each other. They provide an initial estimate of the classifier and a region  of the $2-D$ space to be explored by the agent to refine the estimate and identify the  classifier more accurately. The key idea that we would like to explore is that rather than collecting large number of data samples from both regions for accurate estimation, is it possible to strategically sample few data points which leads to the same accurate estimation as the large data set. Thus, problem considered here can be interpreted as an adaptive sampling problem which is restricted by the agent's dynamics.    
\begin{figure}
\begin{center}
\includegraphics[width=\columnwidth]{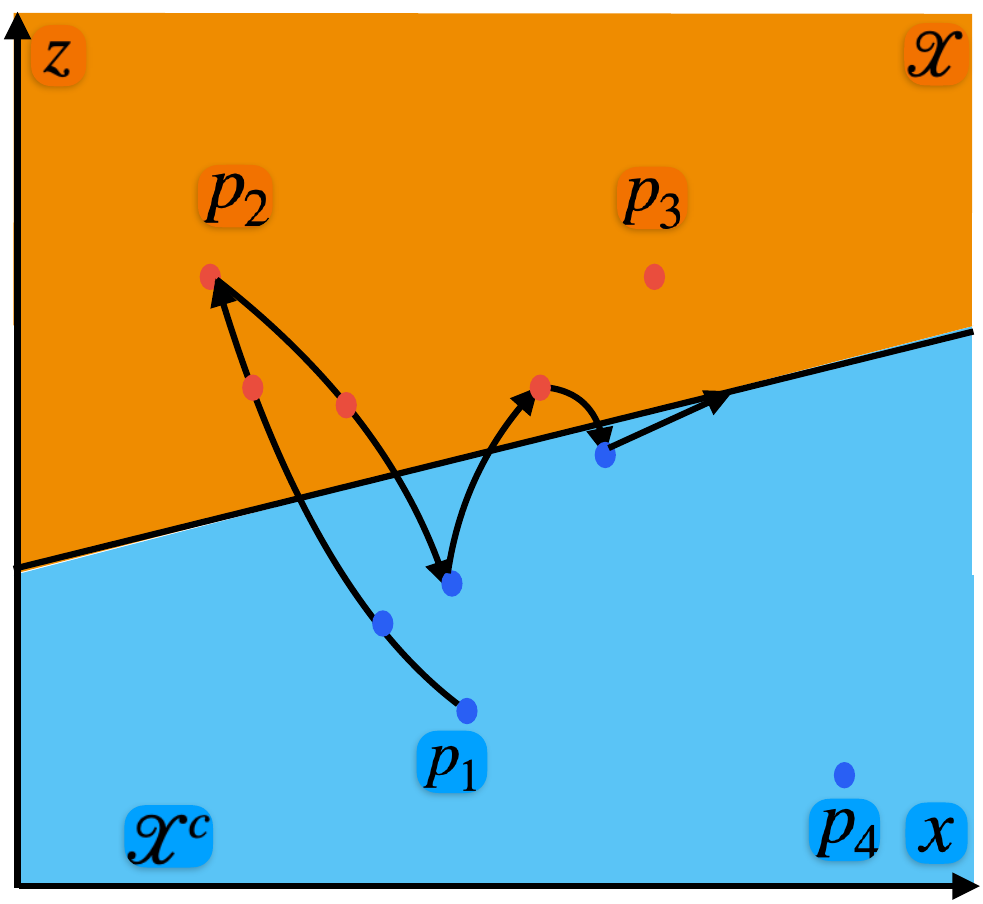}
\caption{Schematic for the motion planning problem} 
\label{Figure 1}
\end{center}
\vspace{-1cm}
\end{figure}
\subsection{Contributions}\label{subsection 1.3}
We formulate the identification problem of the true classifier as described above has a control problem in both the deterministic and stochastic scenarios. For the latter, we present an explicit construction of the probability space. The formulated control problems are analyzed and the associated challenges and drawbacks are presented. We present a geometric interpretation of the problem using 2D analytic geometry. Utilizing the geometric ideas, we formulate one step control problems which can be solved in a numerically efficient manner. For the stochastic scenario, we derive the equations for propagation of the probability that the true parameters characterizing the true classifier belongs to a given set in the parameter space. We present separate control algorithms (a ``greedy" approach) for noiseless and noisy data to solve the identification problem which involves solving the one step control problems formulated before. The data set obtained by executing the control algorithm is used to identify the classifier. In the deterministic case, we prove that the classifier identified converges to the true classifier, i.e., the estimated parameters of the classifier converges to the true classifier. In stochastic case, the data collected results in sets to which the true parameters belongs to with high probability. We present an example for both cases illustrating the control algorithms and the resulting classifiers. 
\subsection{Outline}\label{subsection 1.4}
In Section \ref{section 2}, we present the formulation of the identification problem as a control problem. In Section \ref{section 3}, we present the geometric interpretation of the identification problem and the associated one step control problems. In section \ref{section 4}, we present the control algorithms and the proof of convergence to true classifier in the deterministic scenario. In Section \ref{section 5}, we present a numerical example for both scenarios which demonstrates the application of the control algorithm. In Section \ref{section 6}, we summarize the work presented in the paper and discuss future work.
Given a collection of sets $\mathfrak{B}$, we denote the smallest $\sigma$ algebra generated by it as $\sigma(\mathfrak{B})$.
\section{Problem Formulation}\label{section 2}
In this section, we present the abstraction of the problem discussed in subsection \ref{subsection 1.2} . 
\subsection{Deterministic Scenario: Abstraction}\label{subsection 2.1}
\subsubsection{Identification of the Classifier}\label{subsubsection 2.1.1}
By executing a control policy and following the corresponding path, the positions at which measurements are collected by the agent is denoted by $\{x_{j}, z_{j}\}^{m}_{j=1}$. The true labels of these positions is denoted by $\{y_{j}\}^m_{j=1}$. If the true classifier is denoted by the straight line $z = \rho^* x + c^*$, then $y_j = \sign(z_j - \rho^*x_j -c^*), j=1, \ldots, m$. Given the data points, the classification problem can be formulated as, 
\begin{align*}
\underset{\rho \in \mathbb{R}, c\in \mathbb{R}} \min \sum^{m}_{j=1} y_{j}\sign(z_j - \rho x_j -c).
\end{align*}
The optimization problem is to choose parameters $\rho$ and $c$ so that the resulting line classifies as many data points as possible correctly. This formulation does not yield to computationally effective solutions. The classical reformulation, \cite{paulsen2016introduction}, \cite{hofmann2008kernel}, of the problem is to identify a line which (i) maximizes the minimum distance of the data points from the line; (ii) ensures that the points are classified correctly. Given a possible classifier, $z - \rho x -c = 0$, the distance of the point $(x_k,z_{k})$ is given by $\frac{|z_{k} - \rho x_{k} -c |}{\rho^2}$. Through suitable re-normalization, it can be ensured that  $|z_{j^*} - \rho x_{j^*} - c| = 1$ for atleast one of the data points,$(x_{j^*}, z_{j^*}, y_{j^*})$, while  $|z_{j} - \rho x_{j} - c| \geq  1$ for all other data points. The classification problem can be reformulated as, 
\begin{align*}
\underset{\rho \in \mathbb{R}, c\in \mathbb{R}} \min\;\; \frac{\rho^2}{2} \;\; \text{s.t} \;\;  y_{j}(z_{j} - \rho x_{j} - c )\geq  1, j=1, \ldots, m.
\end{align*}
The above optimization problem is a convex optimization problem which can be solved effectively by considering the dual problem, \cite{hofmann2008kernel}. 
\subsubsection{Control Problem}\label{subsubsection 2.1.2}
The state space of the agent is $\mathbb{R}^2$, while the set of actions (action space) that it can take is denoted by $\mathcal{U}$. The state of the system at time $t$ is denoted by $\zeta(t) = [x(t), z(t)]$ while the action is denoted by $u(t)$. Along with the full state information at $t$, the observation of the system at $t$ is the true label at that state which is denoted by $y(t)$. Thus, $y(t)= \sign (z(t) - \rho^{*} x(t) -c^{*})$ where $\rho^*$ and $c^*$ the parameters corresponding to the true classifier and are unknown. The observation $y(t)$ is available to the agent and is used to estimate $\rho^*$ and $c^*$. The initial state of agent is the point $p_{1}$ which we denote as $(\bar{x}_1, \bar{z}_1, -1)$. For the deterministic case,  we do not consider process noise or measurement noise. The agent is modeled as a discrete time system with known dynamics, 
\begin{align*}
&\zeta(t+1) =\phi(\zeta(t), u(t)),\; \bar{y}(t) = [\zeta(t), y(t)],\;  \bar{y}(0) = [\bar{x}_1, \bar{z}_1,   \\
& -1], \; y(t)= \sign (z(t) - \rho^{*} x(t) -c^{*}), t = 0, \ldots m-1. 
\end{align*}
The learning problem formulated as a control problem is to find a control policy, $\{\Upsilon_{j}\}^{m-1}_{j=0}$, where $\Upsilon_j: \mathbb{R}^{2^{j+1}} \times\{-1,0,1\}^{j+1} \to \mathcal{U}$, which minimizes the control cost while ensuring that the data points collected yield a suitable classifier. The optimization problem is, 
\begin{align*}
&\underset{\{\Upsilon_{t}\}^{m-1}_{t=0}, \{\rho,c\} \subset \mathbb{R}} \min  \frac{\rho^2}{2} + \sum^{m-1}_{t=0} \Big|\Big| \Upsilon_{t}\Big( \{\bar{y}(j)\}^{t-1}_{j=0}\Big) \Big|\Big|^2, \; \text{s.t}, \; \\
&\  y(t)(z(t) - \rho x(t) - c )\geq  1, t=1, \ldots, m, \; |\sum^{m}_{t=0} y(t)| \leq 1, \\
&(\bar{z}_j - \rho \bar{x}_j - c ) \leq  1, j=1,4, (\bar{z}_j - \rho \bar{x}_j - c )\geq  1,j=2,3, \\
&\zeta(t+1) = \phi \Big(\zeta(t), \Upsilon_{t}\Big( \{\bar{y}(j)\}^{t-1}_{j=0}\Big) \Big), t = 0, \ldots m-1, y(t)\\
&= \sign (z(t) - \rho^{*} x(t) -c^{*}), \bar{y}(t) = [\zeta(t), y(t)], t = 0, \ldots m.
\end{align*}
In the constraints of the above optimization problem, in the second line, the constraints are to ensure that the four given points are classified correctly. The constraint, $|\sum^{m}_{t=0} y(t)| \leq 1$, has been included to ensure that equal number points are visited in both regions or at most one additional point is visited in one of the regions.  
\subsection{Stochastic Scenario: Abstraction}\label{subsection 2.2}
\subsubsection{The Noise Model}\label{subsubsection 2.2.1}
First, we describe the noise model. In the region close to the true classifier, the measured labels are equal to true labels with probability $p$ (noise = $1$) and get flipped to the other label with probability $1-p$ (noise = $-1$). In the region far away from the true classifier, the measured label equals the true label with probability $1$. Given the parameters of the true classifier, $\rho^*$ and $c^*$, and some $\bar{c} >0$, large enough, define the set $\hat{E}$, as $\hat{E} =\{(x,z) \in \mathbb{R}^2 : z- \rho^* x -c^*+\bar{c} \geq 0 \; \text{and}  \; z- \rho^* x -c^*-\bar{c} \leq 0\}$. The region has been depicted in Figure \ref{Figure 2}.  Then, one possible model for noise is 
\begin{align*}
\varepsilon(x,z) = 
\begin{cases}
 1 & \text{with probability $p$ if $(x,z) \in \hat{E}$}\\
-1 & \text{with probability $1-p$ if $(x,z) \in \hat{E}$}\\
1  & \text{with probability 1 if $(x,z) \in \mathbb{R}^2 \sim \hat{E}$} 
\end{cases}.
\end{align*}
\begin{figure}
\begin{center}
\includegraphics[width=\columnwidth]{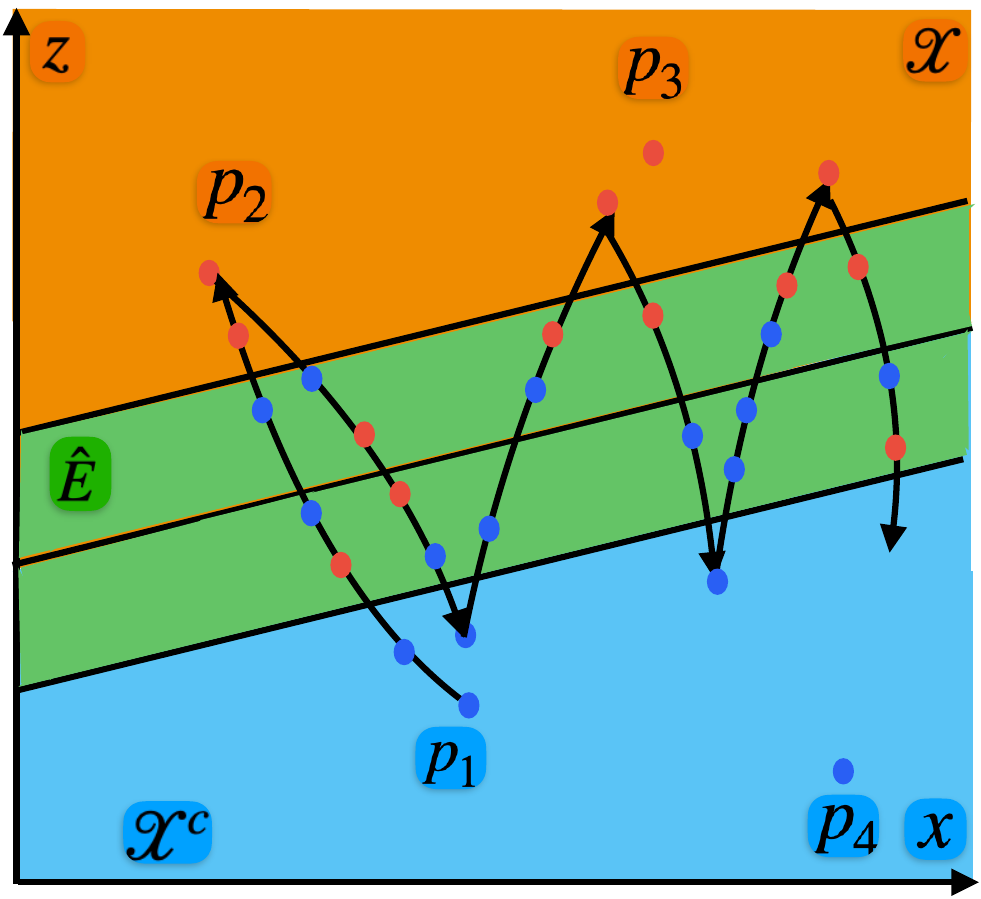}
\caption{Schematic for the motion planning problem with noisy data} 
\label{Figure 2}
\end{center}
\vspace{-0.9cm}
\end{figure}
The region $\hat{E} \cap \mathcal{X}$ carries the true label $1$ (indicated in red). However, due to noise some of the observed labels are $-1$. There are four data points whose true label is $1$ while the observed label is $-1$. In the region, $\hat{E} \cap \mathcal{X}^c$ the true label is $-1$. There are $3$ data points whose true label is $-1$ while the observed label is $1$. Note that this model for noise requires the knowledge about the true classifier which is unknown and hence cannot be used. Since it is difficult to define the regions ``close to" and ``far off" with out invoking the definition of the true classifier, we consider an alternate noise model. We assume that we are given a sequence (or a large set) of points, $\{p_{j}\}_{j \geq 1} (\{p_{j}\}^{I}_{j = 1})$, such that a subsequence (r a subset) of which lies in the region with label $1$ while the complement subsequence (subset) lies in the region with label $-1$. Around each point, $p_{j}=(x_j, z_j)$, there exists a region , $B_{r}(p_{j}) = \{(x,z): || (x,z) - (x_j,z_j) || \leq  r \}$, where the measured label equals the true label with probability 1. For the remaining state space, the noise takes values $1$ and $-1$ with probability $p$ and $1-p$ respectively. Stating the same precisely, let $\bar{E} =\underset{j \geq 1}\bigcup B_{r}(p_{j})$, then, 
\begin{align*}
\varepsilon(x,z) = 
\begin{cases}
 1 & \text{with probability 1 if $(x,z) \in \bar{E}$} \\
 1 & \text{with probaility $p$ if $(x,z) \in \mathbb{R}^2 \sim \bar{E}$}\\
-1 & \text{with probaility $1-p$  if $(x,z) \in \mathbb{R}^2 \sim \bar{E}$}
\end{cases}.
\end{align*}
\begin{figure*}
\hrulefill 
\begin{align}
&\bar{\mathbb{P}}_{0}((X(0),Z(0)) \in E, Y(0) =1) = \bar{\mathbb{P}}_{0}((X(0),Z(0)) \in E \cap \mathcal{X}, Y(0) =1) = \bar{\mathbb{P}}_{0}(X(0),Z(0) \in E \cap \mathcal{X}), \nonumber \\ 
&\bar{\mathbb{P}}_{0}((X(0),Z(0)) \in E, Y(0) =-1) = \bar{\mathbb{P}}_{0}((X(0),Z(0)) \in E \cap \mathcal{X}^c, Y(0) =-1) = \bar{\mathbb{P}}_{0}(X(0),Z(0) \in E \cap \mathcal{X}^c). \label{Equation 1} \\
& \mathbb{P}_{0}((\zeta(0), Y(0)) \in E, U(0) \in F) \hspace{-3pt}=  \hspace{-3pt}\int_{E}\int_{F} \hspace{-3pt}\bigg[d\Gamma_{0}(x_{0}, z_0,y_0)[u] \bigg] d \bar{\mathbb{P}}_{0}(x_{0}, z_0,y_0) = \Gamma_{0}( E )[U(0) \in F]\bar{\mathbb{P}}_{0}( (\zeta(0), Y(0) \in E ) \label{Equation 2} \\
&\hat{\mathbb{P}}_{n-1}(\zeta(n) \in E) = \mathbb{P}_{n-1} \Big(\Big(\zeta(0), Y(0), \{U(j)\}^{n-1}_{j=0}, \{\varepsilon(j)\}^{n-1}_{j=1}\Big): \phi^{n}(\zeta(0), \{U(j)\}^{n-1}_{j=0}) \in E\}, E \in \mathcal{B}(\mathbb{R}^{2}) \label{Equation 3} \\
& \bar{\mathbb{P}}_{n}\Big(\Big(X(0),Z(0), Y(0), \{U(j)\}^{n-1}_{j=0}, \{\varepsilon(j)\}^{n-1}_{j=1}\Big) \in E , \varepsilon(n) = 1 \Big) \hspace{-3pt}= p \hat{\mathbb{P}}_{n-1}(\zeta(t) \in \phi^{n}(E) \cap  \mathbb{R}^2 \sim \bar{E}) + \hat{\mathbb{P}}_{n-1}(\zeta(t) \in \phi^{n}(E)  \nonumber \\
&\cap \bar{E}), \;  \bar{\mathbb{P}}_{n}\Big(\Big(X(0),Z(0), Y(0), \{U(j)\}^{n-1}_{j=0}, \{\varepsilon(j)\}^{n-1}_{j=1}\Big) \in E , \varepsilon(n) = -1 \Big) = (1 - p) \hat{\mathbb{P}}_{n-1}(\zeta(t) \in \phi^{n}(E) \cap  \mathbb{R}^2 \sim \bar{E})    \label{Equation 4} 
\end{align}
\hrulefill 
\vspace{-0.4cm}
\end{figure*}
\subsubsection{The Probability Space}\label{subsubsection 2.2.2}
Given the above noise model, we construct the probability space for the stochastic control problem as follows.  
\textit{Sample Space and $\sigma$ Algebra}: Let $\Omega = \mathbb{R}^2 \times \{-1,1\} \times \mathcal{U}^{\mathbb{Z}_{+}} \times \{-1,1\}^{\mathbb{N}}$. The sample space is the product set of the set of all possible values that can be taken by the initial states $X(0, \cdot), Z(0, \cdot)$, the initial observation $Y(0,\cdot)$, the control trajectory $\{U(t, \cdot)\}_{t \geq 0}$ and the sequence of noise variables $\{\varepsilon(t, \cdot)\}_{t \geq 1}$. By considering the Borel $\sigma$ algebra on $\mathcal{U}$, $\mathcal{B}(\mathcal{U})$, the Borel $\sigma$ algebra on $\mathbb{R}^2$, $\mathcal{B}(\mathbb{R}^{2})$, and the algebra $\mathcal{B}(\{-1,1\}) = \{\{1\}, \{-1\}, \{-1,1\}, \emptyset\}$ on $\{-1,1\}$, we define
\begin{align*}
&\mathfrak{B}_{0} =\{ F \subset  \mathbb{R}^2  \times \{-1,1\} \times \mathcal{U}  : \; F = E_{1} \times E_{2} \times E_3, \\
&\hspace{1.3cm}\text{where} \;E_{1} \in \mathcal{B}(\mathbb{R}^{2}),  E_{2} \in \mathcal{B}(\{-1,1\}), E_3 \in  \mathcal{B}(\mathcal{U}) \} \\
&\mathfrak{B} = \{F \subset \mathcal{U}  \times \{-1,1\}: F = E_{1} \times E_{2}, \; \text{where} \\
&\hspace{4.15cm}E_{1} \in \mathcal{B}(\mathcal{U}), E_{2} \in \mathcal{B}(\{-1,1\})  \}.
\end{align*}
We define the collection of cylindrical subsets of $\Omega$ as, 
\begin{align*}
\bar{\mathfrak{B}} = &\{ \bar{\omega} \in \Omega: \bar{\omega}(0) \in E_{0}, \bar{\omega}(1) \in E_{1}, \ldots, \bar{\omega}(n) \in E_{n}, \\
\; &\text{where}\; E_{0} \in \sigma(\mathfrak{B}_{0}),\{E_{j}\}^n_{j = 1} \in \sigma(\mathfrak{B}), n \in \mathbb{Z}_{+}\}
\end{align*}
Let $\mathcal{F}$ be the smallest $\sigma$ algebra generated by $\bar{\mathfrak{B}}$. We note that $\bar{\omega}(t)$ is a 4-tuple at $t=0$ which we denote as $\bar{\omega}(0) = [x_0,z_0,y_0,u_0]$ and is a 2-tuple for $t \geq 1$ which we denote as $\bar{\omega}(t) = [u_t,e_t]$. For any trajectory $\omega \in \Omega$, $U(t,\omega) = u_{t}$ and $\varepsilon(t, \omega) = e_{t}$, i.e., the components corresponding to the control and noise. 

Hence, given a measurable mapping $\phi(\cdot)$ corresponding to the dynamics of the agent, the sequence  $\{\zeta(t, \cdot)\}_{t\geq 0}$, generated as $\zeta(t+1, \omega) = \phi(\zeta(t,\omega), U(t,\omega))$ is a stochastic process on $(\Omega, \mathcal{F})$. We denote the two components of $\zeta(t,\cdot) $ as $\zeta(t,\cdot) = [X(t,\cdot), Z(t, \cdot)]$. The sequence of true labels at the states $\{\zeta(t, \cdot)\}_{t\geq 0}$, $\{Y(t, \cdot)\}_{t\geq 0}$ generated as $Y(t,\omega) = \sign(Z(t,\omega) - \rho^* X(t, \omega)) -c^*$ is also a stochastic process on $(\Omega, \mathcal{F})$, where $\rho^*$ and $c^*$ are the parameters of the true classifier. The sequence of observed labels, $\{\bar{Y}(t, \cdot)\}_{t\geq 0}$ generated as $\bar{Y}(t, \omega) = Y(t, \omega) \varepsilon(t,\omega)$ is a stochastic processes on $(\Omega, \mathcal{F})$.

\textit{The Probability Measure:} We construct the  measure on $(\Omega, \mathcal{F})$ as follows. The distribution of the initial state, $\zeta(0)$, $\bar{\mathbb{P}}_{0}(X(0),Z(0))$ is assumed to be known and the joint distribution with the initial observation is defined in Equations \ref{Equation 1}. The measure $\bar{\mathbb{P}}_{0}$ is then extended to the $\sigma$ algebra, $\sigma(\bar{\mathfrak{B}}_{0})$, where $\bar{\mathfrak{B}}_{0} = \{ F \subset  \mathbb{R}^2  \times \{-1,1\} : F = E_{1} \times E_{2} \; \text{where} \; E_{1} \in \mathcal{B}(\mathbb{R}^{2}),  E_{2} \in \mathcal{B}(\{-1,1\}) \}$.  The equations essentially state that $\bar{\mathbb{P}}_{0} (Y(0) =1 | (X(0),Z(0)) \in E \cap \mathcal{X}) =1$ and $\bar{\mathbb{P}}_{0} (Y(0) =-1| (X(0),Z(0)) \in E \cap \mathcal{X}^c) =1$ for $E \in \mathcal{B}(\mathbb{R}^{2})$. This states that the label at the initial state is known perfectly. Hence,  it is reasonable to assume that the distribution of the initial states is concentrated in a region far away from the true classifier or in the set $\bar{E}$ (defined in the noise model) where the true label is known with probability 1. 

The control policy at $t=0$, $\Gamma_{0} : \sigma(\bar{\mathfrak{B}}_{0})  \to \mathcal{P}(\mathcal{U}, \mathcal{B}(\mathcal{U}))$ is to be found, where $\mathcal{P}(\mathcal{U}, \mathcal{B}(\mathcal{U}))$ is the set of probability measures on the Borel $\sigma$ algebra of $\mathcal{U}$. Given the control policy at $t=0$, the measure at stage $t=0$ is defined for $E \in \sigma(\bar{\mathfrak{B}}_{0}), F \in \mathcal{B}(\mathcal{U})$ in Equation \ref{Equation 2}. The measure is then extended to the $\sigma$ algebra, $\sigma(\{F   \subset  \mathbb{R}^2  \times \{-1,1\} \times \mathcal{U} : F= E_{1} \times E_{2}, E_{1} \in \sigma(\bar{\mathfrak{B}}_{0}), E_{2} \in  \mathcal{B}(\mathcal{U}) \}) =  \sigma(\mathfrak{B}_{0}) \overset{\Delta}{=} \mathcal{F}_{0}$.

For any $n \in \mathbb{N}$, the distribution of state $\zeta(n)$ is defined in Equation \ref{Equation 3}. We note that given the sets to which $\zeta(0)$ and $\{U_{j}\}^{n-1}_{j=0}$ belong, we can precisely state the set to which $\zeta(n)$ belongs. However, to find the probability that $\zeta(n)$ belongs to a measurable set, the probability of events like $U(j) \in E$ is needed. We  note that probability  of $U(j) \in E$ depends on $\Big(\zeta(0), Y(0), \{U_{k}\}^{j-1}_{k=0}, \{\varepsilon(k)\}^{j}_{k=1}\Big)$, i.e, not only on the initial state and the past control values but also on the initial observation and the noise random variables. Hence, in Equation \ref{Equation 3} the random variables, $Y(0), \{\varepsilon(j)\}^{n-1}_{j=1}$, have also been included while finding the distribution of $\zeta(n)$. The distribution of the observation $Y(n)$ is given by, $\hat{\mathbb{P}}_{t-1}(Y(t) = 1) = \hat{\mathbb{P}}_{t-1}(\zeta(t) \in \mathcal{X}), \; \hat{\mathbb{P}}_{t-1}(Y(t) = -1) = \hat{\mathbb{P}}_{t-1}(\zeta(t) \in \mathcal{X}^c)$. Similarly, the joint distribution of $\zeta(t)$ and $Y(t)$ can be found. 
\begin{figure*}
\hrulefill 
\begin{align}
&\mathbb{P}_{n}\Big(\Big(\zeta(0), Y(0), \{U(j)\}^{n-1}_{j=0}, \{\varepsilon(j)\}^{n}_{j=1}\Big) \in E , U(n) \in F \Big) = \nonumber \\
&\hspace{4cm}\Gamma_{n}(E)[U(n) \in F ]\bar{\mathbb{P}}_{n}\Big(\Big(\zeta(0), Y(0), \{U(j)\}^{n-1}_{j=0}, \{\varepsilon(j)\}^{n}_{j=1}\Big) \in E \Big), E \in \mathcal{F}_{n-1}, F \in \mathcal{B}(\mathcal{U})   \label{Equation 5} \\
&\mathbb{P}_{n}\Big(\Big(\zeta(0), Y(0), \{U(j)\}^{n-1}_{j=0}, \{\varepsilon(j)\}^{n-1}_{j=1}\Big) \in E ,  \varepsilon(n) \in \{-1,1\},  U(n) \in \mathcal{U} \Big)  = \Gamma_n(E \times (\{-1,1\})[ U(n) \in  \mathcal{U}] \times  \nonumber \bar{\mathbb{P}}_{n}\Big(\Big(\zeta(0),  \\
&Y(0), \{U(j)\}^{n-1}_{j=0}, \{\varepsilon(j)\}^{n-1}_{j=1}\Big) \in E ,  \varepsilon(n) \in \{-1,1\}\Big) = p\hat{\mathbb{P}}_{n-1}(\zeta(t) \in \phi^{n}(E) \cap  \mathbb{R}^2 \sim \bar{E})+ \hat{\mathbb{P}}_{n-1}(\zeta(t) \in \phi^{n}(E) \cap \bar{E}) +   \nonumber  \\
& (1 - p) \hat{\mathbb{P}}_{n-1}(\zeta(t) \in \phi^{n}(E) \cap  \mathbb{R}^2 \sim \bar{E})  = \hat{\mathbb{P}}_{n-1}(\zeta(t) \in \phi^{n}(E) ) = \mathbb{P}_{n-1}\Big(\Big(\zeta(0), Y(0), \{U(j)\}^{n-1}_{j=0}, \{\varepsilon(j)\}^{n-1}_{j=1}\Big) \in E\Big)\label{Equation 6} \\
&\frac{p \hat{\mathbb{P}}_{n-1}(\zeta(t) \in \phi^{n}(\tilde{E}_{1}) \cap \bar{E}^c)  \hspace{-2pt} +  \hspace{-2pt} \hat{\mathbb{P}}_{n-1}(\zeta(t) \in \phi^{n}(\tilde{E}_{1} ) \cap \bar{E}) }{\hat{\mathbb{P}}_{n-1}(\zeta(t) \in \phi^{n}(\tilde{E}_{1} ))}   \hspace{-3pt} \neq \hspace{-3pt}   
\frac{p \hat{\mathbb{P}}_{n-1}(\zeta(t) \in \phi^{n}(E_1) \cap \bar{E}^c)  \hspace{-2pt} +  \hspace{-2pt} \hat{\mathbb{P}}_{n-1}(\zeta(t) \in \phi^{n}(E_1) \cap \bar{E}) }{\hat{\mathbb{P}}_{n-1}(\zeta(t) \in \phi^{n}(E_1))},\label{Equation 7}
\end{align}
\hrulefill 
\vspace{-0.4cm}
\end{figure*}

Given $\mathbb{P}_{n-1}$ on $(\Omega_{n-1}, \mathcal{F}_{n-1})$, $\mathbb{P}_{n}$ is inductively defined as follows. $\bar{\mathbb{P}}_{n-1}$ is defined for $E \in \mathcal{F}_{n-1}$ and $F \in \mathcal{B}(\{-1,1\})$ in Equation \ref{Equation 4}. It is then extended to $\sigma(\mathcal{F}_{n-1} \times \mathcal{B}(\{-1,1\}))$. Because of the joint distribution when $\varepsilon(n)=1$ in Equation 4, following conditional independence does not hold:
\begin{align*}
\mathbb{P}_{n}(\varepsilon(n) \in F | \zeta(n) \in E_{1}, \{\varepsilon(j)&\}^{n-1}_{j=1} \in E_{2}) \\
&\neq  \mathbb{P}_{n}(\varepsilon(n) \in F | \zeta(n) \in E_{1}),
\end{align*}
$F \in \mathcal{B}(\{-1,1\})$, $E_{1} \in \mathcal{B}(\mathbb{R}^2)$, and $E_{2} \in \mathcal{B}(\{-1,1\}^{n-1})$. This is because event $E_{2}$ interferes with event $E_1$ causing a change in the set of states which we denote by $\tilde{E}_1$,
\begin{align*}
\Big(\zeta(0), Y(0), \{U(j)\}^{n-1}_{j=0}, \{\varepsilon(j)\}^{n-1}_{j=1}\Big) \in E_{1} \Big) \bigcap \Big(\{\varepsilon(j)\}^{n-1}_{j=1}  \\
\in E_{2})  \overset{\Delta}{=}  \Big(X(0),Z(0), Y(0), \{U(j)\}^{n-1}_{j=0}, \{\varepsilon(j)\}^{n-1}_{j=1}\Big) \in \tilde{E}_{1}. 
\end{align*}
The expressions for the L.H.S and R.H.S are mentioned in Inequality \ref{Equation 7}, where $\bar{E}^c = \mathbb{R}^2 \sim \bar{E}$. Due to the noise model, a change in set of states leads to different conditional probabilities violating the equality  required for conditional independence. 
In Equation \ref{Equation 4}, $\phi^n(E)$, denotes the set of states that can be reached when the random variables $\Big(\zeta(0), Y(0), \{U(j)\}^{n-1}_{j=0}, \{\varepsilon(j)\}^{n-1}_{j=1} \Big)$ belong to $E$. Note that $\{Y(0), \{\varepsilon(j)\}^{n-1}_{j=1}\}$ impacts $\{U(j)\}^{n-1}_{j=0}$ and hence the state $\zeta(n)$ itself. That is, if we consider only the realizations of $\zeta(0)$ and $\{U(j)\}^{n-1}_{j=0}$ form the set $E$, then valid  $\zeta(n)$ could be found. However it is possible that certain realizations of $\{U(j)\}^{n-1}_{j=0}$ do not occur, i.e, has measure zero given the noise $ \{\varepsilon(j)\}^{n-1}_{j=1}$ and thus the corresponding realizations of $\zeta(n)$ do not occur. Hence, we need to include $\{Y(0), \{\varepsilon(j)\}^{n-1}_{j=1}\}$ while finding valid realizations of $\zeta(n)$. Given the control policy at stage $n$, $\Gamma_{n} : \sigma(\mathcal{F}_{n-1} \times \mathcal{B}(\{-1,1\})) \to \mathcal{P}(\mathcal{U}, \mathcal{B}(\mathcal{U}))$, the measure defined in Equation \ref{Equation 5} for can be extended to $\mathcal{F}_{n} = \sigma(\sigma(\mathcal{F}_{n-1} \times \mathcal{B}(\{-1,1\})) \times \mathcal{B}(\mathcal{U}))$. Equation \ref{Equation 6}, proves that the sequence of measures $\{\mathbb{P}_{n}\}_{n \geq 0}$ is a sequence of consistent measures on $\{(\Omega_{n}, \mathcal{F}_n)\}_{n \geq 0}$ where $\Omega_n$ is suitably defined. By \textit{Kolmogorov's Consistency Theorem}, there exists a measure $\mathbb{P}$ on $(\Omega, \mathcal{F})$ such that $\mathbb{P}(E) = \mathbb{P}_{n}(E) \forall E \in \mathcal{F}_n \forall n \in \mathbb{Z}_+$  
\subsubsection{Identification of the Classifier} \label{subsubsection 2.2.3}
Along a random path, $\{\zeta(t,\omega)\}_{t\geq 0}$, followed by the agent, consider the first $m$ data points, $\{\bar{Y}(t, \omega)\}^m_{t=1}$,  collected by the agent at the positions, $\{X(t, \omega), Z(t, \omega)\}^m_{t=1}$. For e.g. in Figure \ref{Figure 2}, the path executed by the agent results in 27 data points being collected. Due to noise, the labels at many of the data points gets flipped as indicated in the figure. Given the noisy data set, the classification problem can be formulated as follows. Let $L^{1}(\Omega_{m-1},\mathcal{F}_{m-1}, \mathbb{P}_{m-1})$ be the space of integrable random variables on the associated probability space.  The classification problem is then, 
\begin{align*}
\underset{\substack{\hspace{10pt} \\ \rho, c\in L^{1}(\Omega_{m-1},\mathcal{F}_{m-1}, \mathbb{P}_{m-1})}} \min \hspace{-0.9cm} \mathbb{E}_{\mathbb{P}_{m-1}}\Big[\sum^{m}_{j=1} Y(j)\sign(Z(j) - \rho X(j) -c)\Big].
\end{align*}
The SVM reformulation of the classification problem  is
\begin{align*}
&\underset{\substack{\hspace{10pt} \\ \rho, c\in L^{1}(\Omega_{m-1},\mathcal{F}_{m-1}, \mathbb{P}_{m-1})}} \min\;\; \mathbb{E}_{\mathbb{P}_{m-1}}\Big[\frac{\rho^2}{2}\Big]\\
\text{s.t} \;\;  \mathbb{E}_{\mathbb{P}_{m-1}} \Big[& Y(j)(Z_(j) - \rho X(j)  - c ) \Big] \geq  1, j=1, \ldots, m.
\end{align*}
\subsubsection{The Control Problem} \label{subsubsection 2.2.4}
Taking into account the identification cost, the control cost, and optimizing over control policies, the control problem is 
\begin{align*}
\underset{\{\Gamma_j\}^{m-1}_{j=0}, \rho, c \in L^1(\ldots)} \min \hspace{-0.5cm} &\mathbb{E}_{\mathbb{P}_{m-1}}\Big[\sum^{m}_{j=0} Y(j)\sign(Z(j) - \rho X(j) -c)\Big] \\
&\hspace{2.75cm} +\mathbb{E}_{\mathbb{P}_{m-1}}\Big[\sum^{m-1}_{j=0} || U(j) ||^2 \Big]\\
\text{s.t}\;\; X(t+1,\omega) &= \phi(X(t,\omega),U(t, \omega)), t=0, \ldots, m-1, \\
Y(t,\omega) =  \sign(&Z(t,\omega) - \rho^* X(t, \omega)) -c^*), t=0, \ldots, m,\\
\bar{Y}(t, \omega) &= Y(t,\omega) \varepsilon(t,\omega), t=1, \ldots, m.
\end{align*}
We note that the control policies, $\{\Gamma_j\}^{m-1}_{j=0}$, are embedded in the measure $\mathbb{P}_{m-1}$ and hence do not appear explicitly in the  cost function. 
\section{Analysis of the problem}\label{section 3}
\subsection{Deterministic Scenario}
We begin this section with some observations on the control problem formulated in subsection \ref{subsubsection 2.1.2}. Let us consider a dynamic programming approach to solve the problem. The learning cost is not evidently decomposable into a learning cost for each stage. At stage $m$, there is no control cost. Given the data points $\{x_j,z_j, y_j\}^{m}_{j=1}$, the problem is to solve the classification problem which can done using quadratic programming as mentioned before. At stage $m-1$, given the $m-1$ points visited by the agent, the objective is to chose the final data point so that classification can be performed on the resulting data set. The objective of the  classification problem is to find $(\rho,c)$ such that the minimum distance of the data points from the classifier is maximized. Suppose, the true classifier was known. If the control strategy at $m-1$ were to aid this objective, the control action would move the agent to a point where the label is opposite to the current label and whose distance is equal to the minimum distance of the given $m-1$ points from the classifier.  I

If the same argument were to be repeated at stages $j$,  $ 0 \leq j \leq m-2$, the control strategy would move the agent to points which are roughly in the neighborhood of the initial points which is not useful for the identification of the true classifier. If the learning cost is changed to $\sum^{m}_{t=1} y(t)\sign(z(t) - \rho x(t) -c)$, it also does not enforce the same. This is because, given any $m$ data points with true labels the data set is linearly separable and hence the minimum cost is always zero even though the  estimated classifier is not close to the true classifier. It appears as though the cost function for the identification problem is not utilizing the state information and the corresponding label at every stage to enhance the learning process. We explore this idea further in the following. 
\begin{figure}
\begin{center}
\includegraphics[width=\columnwidth]{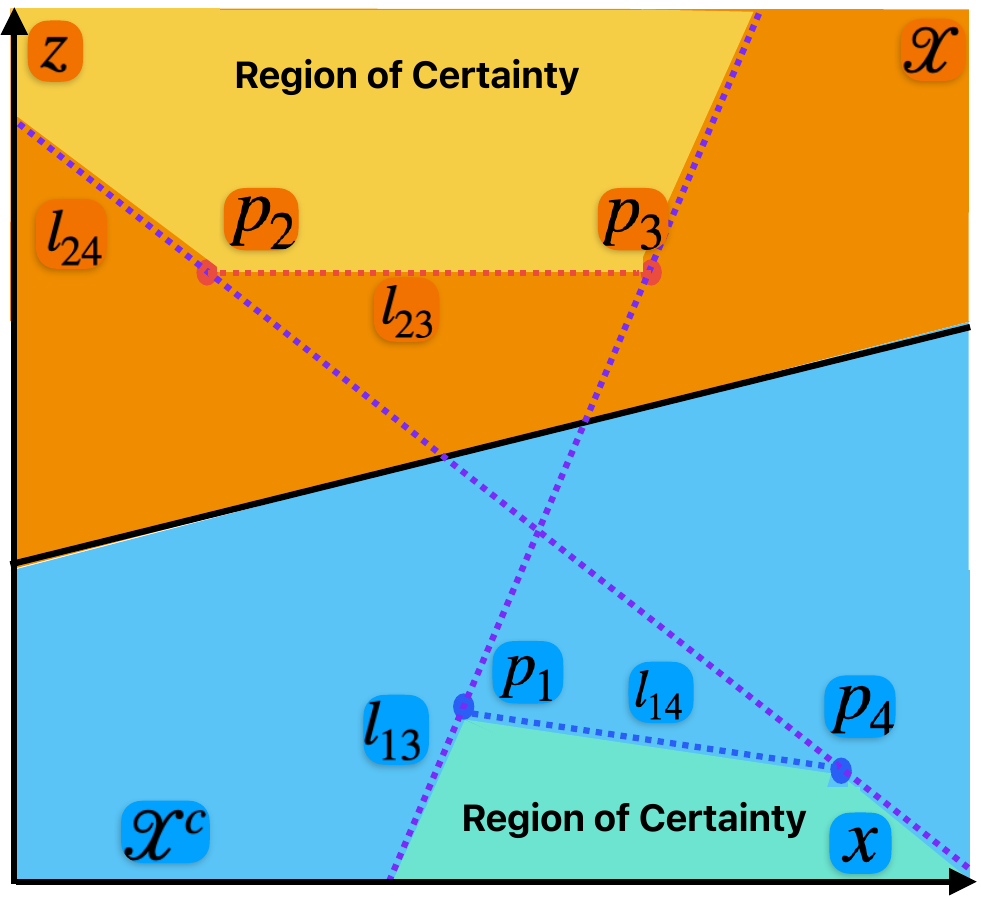}
\caption{Region of certainty from the four given points} 
\label{Figure 3}
\end{center}
\vspace{-0.9cm}
\end{figure}

We consider the scenario in depicted in Figure \ref{Figure 3} as a canonical case. The slope of the true classifier is positive and the $z$ intercept is positive as well. Other cases are : slope positive , intercept negative; slope negative intercept positive; slope negative intercept negative. The other three cases are obtained through translation and rotation of the scenario considered. Hence, the following arguments are applicable to other  cases as well. The lines $l_{13},l_{14},l_{23}, l_{24}$ are defined as $l_{ij} =\{(x,z) \in \mathbb{R}^2: z - \rho_{ij}x - c_{ij} = 0\}, \; i=1,2, j=3,4$. These lines are obtained from the four given points. Each of these lines are ``bounds" for the true classifier is the following sense. 

Consider $l_{23}$ and the region $\{(x,z) \in \mathbb{R}^2: z - \rho_{23}x - c_{23} \geq 0\}$. In this region consider any point whose $x$ co-ordinate lies between that of $p_{2}$ and $p_{3}$. The label of such a point is $1$. This is because, if there is a point with label $-1$, linear separability of data gets violated, i.e., there does no exist a linear classifier that separates $p_1, p_2, p_3,p_4$ and the point with label $-1$. However, we are unable to comment on the region $\{(x,z) \in \mathbb{R}^2: z - \rho_{23}x - c_{23} \leq 0\}$ as there is not enough information. Further, there are two points on $l_{23}$ with label $1$. Hence it is not the true classifier, but a ``bound" for the true one. The line $l_{13}$ is also a bound for the true classifier, since a line with slope slightly greater than slope of $l_{13}$ and intercept slightly less than $c_{13}$ which is in fact negative, does not separate $p_1, p_2, p_3,p_4$. By the same argument $l_{24}$ is also a bound for the true classifier. If either of them is the true classifier, then the corresponding set $\{(x,z) \in \mathbb{R}^2: z - \rho_{24}x - c_{24} \geq 0\}$ or $\{(x,z) \in \mathbb{R}^2: z - \rho_{13}x - c_{13} \leq 0\}$ carries the label $1$. Taking the intersection of the three sets, we obtain a \textit{Region of Certainty}, a set where the true label is $1$ given the four points. Using the same arguments with $l_{14}, l_{13}$, and $l_{24}$, we obtain a region of certainty with label $-1$.  

The slopes and intercepts of the four lines provide bounds on the slope and intercept of the true line. Considering the range of $\arctan$ to be $(-\frac{\pi}{2}, \frac{\pi}{2})$, let $\theta_{ij} = \arctan(\rho_{ij})$. In Figure \ref{Figure 3}, we observe that $\theta_{13} > 0, \theta_{23}\geq 0$, $\theta_{13} >  \theta_{23}$ while $\theta_{14} < 0, \theta_{24} < 0$ and $\theta_{24} < \theta_{14} $. From the linear separability arguments presented above, we conclude that the slope of the true classifier belongs to $[\theta_{24}, \theta_{14} ] \cup [\theta_{23}, \theta_{13}]$. The intercept of the true line belongs to $[c_{\min}, c_{\max}]$ where $c_{\max} = \max(c_{13}, c_{14}, c_{23},c_{24})$ and $c_{\min} = \min(c_{13}, c_{14}, c_{23},c_{24})$. In Figure \ref{Figure 3}, $c_{\max} = c_{24}$ and $c_{\min} = c_{13}$.
\begin{figure}
\begin{center}
\includegraphics[width=\columnwidth]{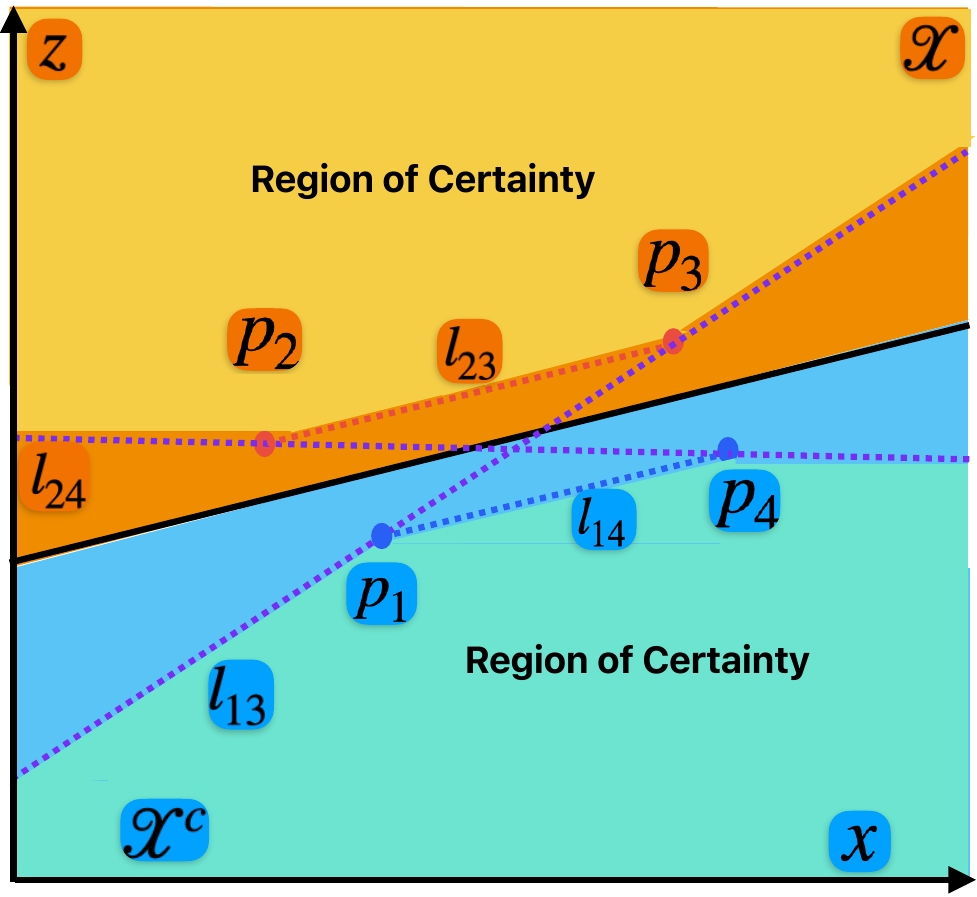}
\caption{Region of certainty from new four points obtained by Agent} 
\label{Figure 4}
\end{center}
\vspace{-0.9cm}
\end{figure}

Consider the scenario depicted in Figure \ref{Figure 4}, where the distance between pairs $p_{1}, p_{2}$ and $p_{3}, p_{4}$ which have opposite labels has reduced while the distance between  $p_{1}, p_{4}$ and $p_{2}, p_{3}$ which carry the same label has increased. In Figure \ref{Figure 4}, we observe that $\theta_{13} > 0, \theta_{23} > 0, \theta_{23} < \theta_{13} $ while $\theta_{14} > 0, \theta_{24} < 0$. The slope of the true classifier thus lies between $[\theta_{24}, 0] \cup [\min(\theta_{23} , \theta_{14}),  \theta_{13}]$ while its intercept lies between $[c_{24}, c_{13}]$. We note that the bounds for the true slope and intercept in scenario of Figure \ref{Figure 4} is a strict subset of the bounds in the scenario of Figure \ref{Figure 3} which    leads to an ``increase" in the region of certainty of both labels. 

Given the above reasoning, the control problem is to be formulated is such a away the region of certainty eventually matches with the entire regions carrying the true label. We consider one step control problems with the objective of pruning the bounds of the slope and intercept of the true classifier so that the set to which the true parameters belong eventually collapses to singletons, i.e., the true value of the parameters. To meet this objective, at a given position, in one step, the agent could either (i) move to a position of opposite label whose distance is less than the distance of the previous  point with opposite label from current position (ii)  move to a position of same label in the current region of uncertainty which is farther way from current position. 
For (i), given the current position and label, $(x(t),z(t), y(t))$, the control problem is formulated as:
\begin{align*}
\min_{u \in \mathcal{U}} || &\zeta(t+1) -\zeta(t) ||^2 + \varrho || u ||^2, \; \text{s.t} \; \; \zeta(t+1) = \phi(\zeta(t),u), \\
&  y(t) \sign(z(t+1) - \rho^*x(t+1)-c^* ) \leq 0
\end{align*}
The above problem has two issues. (i) $\rho^*$ and $c^*$ are unknown and (ii) even if the true parameters were known, the constraint may not be feasible in one step due to the limited control actions that the agent can take. Let $\mathcal{X}_{t}$ denote the region of certainty for label $1$ at stage $t$ and let $\mathcal{X}^c_{t}$ denote the region of certainty for label $-1$ at stage $t$. Let $E_{\rho,t}$ denote the bounds for the slope of the true classifier at stage $t$. We consider the following alternate formulation. 
\begin{align*}
P1 : \hspace{1cm} &\max_{u \in \mathcal{U}} || \zeta(t+1)  - \zeta(t) ||^2 - \varrho || u ||^2 \\
\text{s.t} \; \; \zeta(t+1) &= \phi(\zeta(t),u),   \zeta(t+1)  \notin \mathcal{X}^c_{t},    \zeta(t+1)  \notin \mathcal{X}_{t}  \\
& \arctan(\frac{z(t+1)  - z(t)}{x(t+1)-x(t)}) \notin E_{\rho,t}
\end{align*}
This formulation pushes the agent away to a point which is far away from its current position while not entering the regions of certainty and ensuring that the control cost is small enough. The  regions of certainty can be expressed as the intersection of a set of halfspaces. Hence, the above problem can be solved numerically. The constraint in third line is included to ensure that the agent does not end up traveling parallel to the true classifier in which case it cannot reach a point of opposite label. By moving in any direction not in set $E_{\rho,t}$, the agent is  guaranteed to reach a point of opposite label though it might take multiple steps. In implementation, it is possible to restrict the angles further, for e.g. the agent could be restricted to track the direction of the vector $\overrightarrow{p_{1}p_{2}}$ or the ``bisector" of $E_{\rho,t}$. For the agent to move to point with same label as current  label, however farther away the following problem is solved. 
\begin{align*}
P2 : \hspace{1cm} &\max_{u \in \mathcal{U}} || \zeta(t+1)  - \zeta(t) ||^2 - \varrho || u ||^2 \\
\text{s.t} \; \; \zeta(t+1) &= \phi(\zeta(t),u),   \zeta(t+1)  \notin \mathcal{X}^c_{t},    \zeta(t+1)  \notin \mathcal{X}_{t}  \\
& \arctan(\frac{z(t+1)  - z(t)}{x(t+1)-x(t)}) \in E_{\rho,t}
\end{align*}
In the above the agent could travel in a direction which is parallel to the true classifier, which is in fact good  as it meets the objective. However for all other directions, the agent could move to a point which has a opposite label to the label of its current position. 

Problems $P1$ and $P2$ are reformulations of the identification problem considered in subsection \ref{subsubsection 2.1.2}. The reformulation was necessary to incorporate feedback into the classifier identification problem. The drawback of $P1$ and $P2$ is that the control cost of the entire trajectory is not optimized but gets optimized at every stage, which may be not be optimal for entire trajectory, i.e., it is not the outcome or ``stage" optimization problem of a dynamic programming problem.  
\subsection{Stochastic Scenario}
Consider the scenario depicted in Figure \ref{Figure 2} / Figure \ref{Figure 5}. The scenario was generated by a predetermined path, i.e., no feedback involved, which was tracked by the agent. Observations were collected at $27$ positions. The labels at many of the positions are flipped as depicted in the figure. The collected data set is not linearly separable. Hence, the arguments used in the previous section are not applicable here. Given this data set there are many candidates for the true classifier. In Figure \ref{Figure 5}, we have plotted $3$ of the possible candidates. For candidate $l_{1}$ there are $6$ violations, i.e, $6$ of the observed labels are not classified correctly. For candidates $l_{2}$ and $L_{3}$ there are $7$ and $6$ violations respectively. When the last SVM algorithm was run with this data set, the resulting classifier was $l_3$ as it maximized the minimum distance from the classifier. Hence, executing a predetermined path and treating the noisy data as "deterministic" data is not helpful. 
\begin{figure}
\begin{center}
\includegraphics[width=\columnwidth]{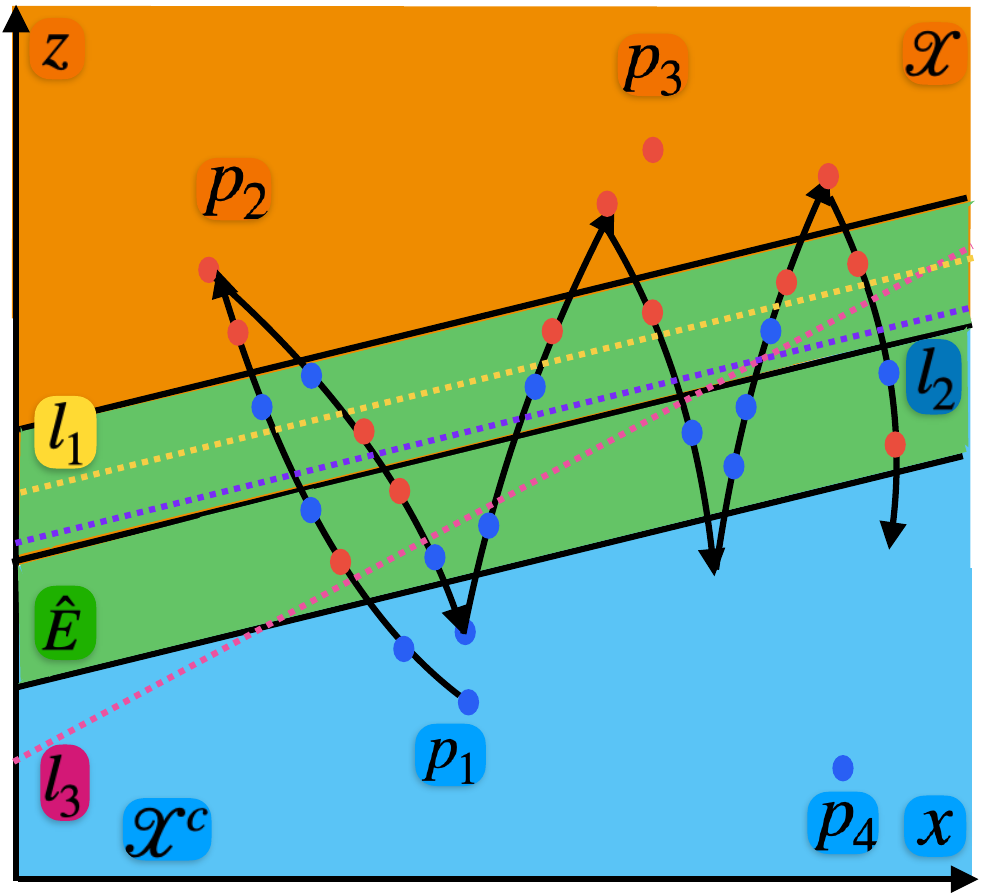}
\caption{Possible classifiers with noisy data set} 
\label{Figure 5}
\end{center}
\vspace{-0.9cm}
\end{figure}
Let $\{\hat{\mathcal{F}}_{t}\}_{t \geq 0}$ be the $\sigma$ algebra generated by $\{\{\zeta(j)\}^{t}_{j=0}, Y(0), \{\bar{Y}(j)\}^{t}_{j=1}\}$. If we consider a dynamic programming approach to solve the stochastic control problem formulated in subsection \ref{subsubsection 2.2.4}, our first observation is that the total identification cost is decomposable  in to identification cost at every stage. An "estimated" label at $(X(j), Z(j))$ could be assigned through threshold policies which are functions of $\mathbb{E}_{\mathbb{P}_{j-1}}[Y(j) | \hat{\mathcal{F}}_{j}]$. However, the computation of this conditional expectation would require the knowledge of $\mathcal{X}, \mathcal{X}^c$, i.e., the parameters of the true classifier which is not available. In the following we consider an alternate approach where the computation of this conditional expectation is not needed.

Given the four initial points with the true label known, we can find a region of certainty for either label. When the agent moves to a new point in the state space and collects an observation,i.e., corrupted label at a point in the state space, the new data point leads to an ``increase" in the region of certainity but with a certain probability. As more data points are collected the region of certainty for both labels increases but with a distribution on the region. Equivalently, the four initial points gives us a set (union of disjoint intervals), $E_{\theta,0}$ to which the true slope $\theta^* = \arctan(\rho^*)$ belongs and an interval, $I_{c,0}$, to which $c^*$ belongs. Every sequence of data points, generates a finite collection of subsets of $E_{\theta,0}/ I_{c,0}$  to which $\theta^*/ c^*$ belongs with a certain probability. Hence, the data points collected build a distribution on the set of values that the slope and the intercept can take. Our objective is to choose the data points that enhances the identification process. 

At step $t$, let $E_{\theta,t}$ denote the finite collection of subsets of $E_{\theta,0}$ generated by the data points collected until $t$ to which $\theta^*$ belongs. Each $E_{\theta,t,j} \in E_{\theta,t}$ is the disjoint union of two intervals or an interval by itself. $\theta^*$ belongs to one of the sets, $E_{\theta,t,j}$. We would like to quantify the probability with which $\theta^* \in E_{\theta,t,j}$. There are $2^t$ possibilities for $\{Y(j)\}^{t}_{j=1}$, some of which leads to a  subset $E_{\theta,t,j} \in E_{\theta,t}$. From the definition of $\bar{Y}(j, \omega)$, $Y(j, \omega) = \bar{Y}(j, \omega) \varepsilon(j,\omega)$. Given $\{\bar{Y}(j, \omega)\}^{t}_{j=1} = \{y_j\}^{t}_{j=1}$, every $\{\varepsilon(j, \omega) \}^{t}_{j=1} = \{e_{j}\}^{t}_{j=1}$ generates a possible true label sequence, $\{Y(j, \omega)\}^{t}_{j=1} = \{y_j\}^{t}_{j=1}$, as  $y_{j} =\bar{y}_j e_j , \; j=1, \ldots t$. It is not necessary that every sequence  $\{Y(j, \omega)\}^{t}_{j=1} = \{y_j\}^{t}_{j=1}$ can be used to find a subset of $E_{\theta,0}$ to which $\theta^*$ belongs. Only the sequences that lead to linearly separable data set, i.e., $\{\zeta(t,\omega), Y(t,\omega)\} \cup \{(p_{1}, -1),(p_{2}, 1),(p_{3}, 1),(p_{4}, -1)\}$ is linearly separable, are considered as valid sequences. We define $E_{t}(\cdot)$ as
\begin{align*}
&E_{t}( \{\bar{Y}(j)\}^t_{j=1}) =\{ \{\varepsilon(j) \}^{t}_{j=1} :\{\zeta(j), \bar{Y}(j)\varepsilon(j)\}^{t}_{j=1} \cup  \\
& \{\zeta(0), Y(0) \} \cup \{(p_{1}, -1),(p_{2}, 1),(p_{3}, 1),(p_{4}, -1)\}\}
\end{align*}
is linearly separable and $\bar{E}_{t}(\cdot)$ as
\begin{align*}
\bar{E}_{t}(\{Y(j)\}^t_{j=1} &| \{\bar{Y}(j)\}^t_{j=1}) = \\
 &\{ \{\varepsilon(j) \}^{t}_{j=1} :   Y(j) = \bar{Y}(j) \varepsilon(j) , \; j =1, \ldots t\}.
\end{align*}
We note that if $\{\varepsilon(j) =  e_{j} \}^{t-1}_{j=1} \notin E_{t-1}( \{\bar{Y}(j)\}^{t-1}_{j=1})$, then $\{\varepsilon(j) =  e_{j} \}^{t-1}_{j=1} \cup \{\varepsilon(t) = e_{t} \} \notin E_{t}( \{\bar{Y}(j)\}^{t}_{j=1})$ for $e_{t} =1 / -1$. Given $\{\bar{Y}(j)\}^t_{j=1}$, let $\{\varepsilon(j) \}^{t}_{j=1} \in E_{t} (\{\bar{Y}(j)\}^t_{j=1})$. For $\{Y(j) = \bar{Y}(j)\varepsilon(j)\}^{t}_{j=1}$, we define,
\begin{align*}
&\bar{\mathbb{P}}_{t}\Big( \hspace{-3pt}\{Y(j)\}^{t}_{j=1} \Big| \{\zeta(j)\}^{t}_{j=0}, \hspace{-2pt} Y(0), \{\bar{Y}(j)\}^{t}_{j=1}, E_{t}( \{\bar{Y}(j)\}^t_{j=1})\hspace{-3pt}\Big ) \\
& \overset{\Delta}{=}  \bar{\mathbb{P}}_{t} \Big(\bar{E}_{t}(\{Y(j)\}^t_{j=1} | \{\bar{Y}(j)\}^t_{j=1})  \Big| \{\zeta(j)\}^{t}_{j=0}, Y(0), E_{t}( \cdot)\Big).
\end{align*} 
The expression for this conditional probability has been stated in Equation \ref{Equation 8}. Some observations about the conditional probability are as follows. The numerator and every term in the denominator gets multiplied by $\bar{\mathbb{P}}_{t-1}(E_{z, t-1})$ but does not appear in the equation as it is a common factor. It might appear that the conditional probability does not depend on the probability of the events, $\{\zeta(j) \in F_j\}^{t-1}_{j=0} , Y(0) \in \bar{F}, \{\varepsilon(j) \in E_j\}^{t-1}_{j=1}$. However on the contrary, the probability associated with theses events is hidden  in the conditional distribution associated with the control policy, $\{\Gamma_{j}\}^{t-1}_{j=0}$.
\begin{figure*}
\hrulefill 
\begin{align}
&\bar{\mathbb{P}}_{t}\Big(\{\varepsilon(j) \in E_j\}^t_{j=1}  \Big |\Big( \{\zeta(j) \in F_j\}^t_{j=0} , Y(0) \in \bar{F} \Big),  E_{t}( \cdot) \Big) = \frac{\bar{\mathbb{P}}_{t}\Big(\{\varepsilon(j) \in E_j\}^t_{j=1}, \Big( \{\zeta(j) \in F_j\}^t_{j=1}, Y(0) \in \bar{F}  \Big) ,  E_{t}( \cdot) \Big)}{\bar{\mathbb{P}}_{t}\Big(  \Big(  \{\zeta(j) \in F_j\}^t_{j=0} , Y(0) \in \bar{F}   \Big), E_{t}( \cdot)  \Big)} \nonumber \\
&= \hspace{-4pt} \dfrac{\Big(\chi_{E_t}(1)\big( p\Gamma_{t-1}[E_{z,t-1}](U \in E_{u,1,t})  + \Gamma_{t-1}[E_{z,t-1}](U \in E_{u,2,t}) \big)  + \chi_{E_j}(-1)(1-p)\Gamma_{t-1}[E_{z,t-1}](U \in E_{u,1,t}) \Big)}{ \hspace{-20pt}\underset{\substack{\hspace{25pt} \\ \hspace{12pt} E_{t}( \{\bar{Y}(j)\}^t_{j=1}) }} \sum  \hspace{-28pt} \Big(  \hspace{-1pt} \chi_{E_t}(1)\big( p\Gamma_{t-1}[E_{z,t-1}](U \in E_{u,1,t}) \hspace{-3pt } + \Gamma_{t-1}[E_{z,t-1}](U \in E_{u,2,t}) \big)  + \chi_{E_j}(-1)(1-p)\Gamma_{t-1}[E_{z,t-1}](U \in E_{u,1,t}) \Big) } \hspace{-3pt} \label{Equation 8}\\
&E_{z,t-1} = \{ \{\zeta(j) \in F_j\}^{t-1}_{j=0} , Y(0) \in \bar{F}, \{\varepsilon(j) \in E_j\}^{t-1}_{j=1} ,  \{\varepsilon(j)\}^{t-1}_{j=1} \in  E_{t-1}( \{\bar{Y}(j)\}^{t-1}_{j=1} ) \}, \;  E_{u,1,t}(F_{t-1}) =  \{ U \in \mathcal{U}: \nonumber \\
&\phi(\zeta(t-1), U) \in F_{t} \cap  \mathbb{R}^2 \sim \bar{E}, \zeta(t-1) \in F_{t-1} \}, \;  E_{u,2,t}(F_{t-1}) = \{ U \in \mathcal{U}: \phi(\zeta(t-1), U) \in F_{t} \cap \bar{E}, \zeta(t-1) \in F_{t-1}  \} \nonumber.\\
&\mathbb{P}_{t} (\zeta(t+1) \in F_{t+1} | \mathcal{I}_{t}) = \frac{ \mathbb{P}_{t}(\zeta(t+1) \in F_{t+1} \cap \mathcal{I}_{t} )}{\mathbb{P}_{t} (\mathcal{I}_{t})}  = \frac{\sum_{\{e_{j}\}^{t}_{j=1}} \mathbb{P}_{t} (\{\varepsilon(j) = e_j\}^{t}_{j=1} \cap \zeta(t+1) \in F_{t+1} \cap \mathcal{I}_{t})}{\mathbb{P}_{t}(\mathcal{I}_{t})} \nonumber\\
&= \frac{\sum_{\{e_{j}\}^{t}_{j=1}} \Gamma_{t}[E_{\mathcal{I}_{t}}]( U \in E_{u,t} (F_t, F_{t+1}) )\mathbb{P}_{t} (E_{\mathcal{I}_{t}})}{\mathbb{P}_{t}(\mathcal{I}_{t})} = \sum_{\{e_{j}\}^{t}_{j=1}}\Gamma_{t}[E_{\mathcal{I}_{t}}]( U \in E_{u,t} (F_t, F_{t+1}) ) \bar{\mathbb{P}}_{t} ( \{\varepsilon(j) = e_j\}^t_{j=1} | \mathcal{I}_{t}),  \nonumber \\ 
& \; \text{where}, \;  E_{\mathcal{I}_{t}} = E_{z,t} =  \{ \{\varepsilon(j) = e_j\}^{t}_{j=1} \cap \mathcal{I}_{t}\} \; \text{and} \;  E_{u,t} (F_t, F_{t+1}) = \{ U \in \mathcal{U}: \phi(\zeta(t), U) \in F_{t+1}, \zeta(t) \in F_{t} \}. \label{Equation 9}
\end{align}
\hrulefill 
\vspace{-0.4cm}
\end{figure*}
We note that conditioning on the event $\{E_{t}(\{\bar{Y}(j)\}^t_{j=1})\}$ not only changes the normalization term, i.e, the denominator in Equation \ref{Equation 8} but also the event corresponding to the control policy, $\Gamma_{j}[\cdot]$.

Every valid true label sequence$, \{Y_{j}\}^{t}_{j=0}$, i.e., a sequence which is  linearly separable, generates a subset $E_{\theta, t, j}$ of $E_{\theta, 0}$  to which $\theta^*$ belongs.  We note that $E_{\theta, t, j}$ obtained from the data set need not be a subset of $E_{\theta, 0}$, however letting $E_{\theta, t, j} = E_{\theta, t, j} \cap E_{\theta, 0}$ since $\theta^* \in E_{\theta, 0}$ Similarly, the same true label sequence generates a subset $I_{c, t, j}$ of $I_{c, 0}$  to which $c^*$ belongs. The conditional probability with which $\theta^*$ and $c^*$ belongs to the sets $E_{\theta,t,j}$ and $I_{c,t,j}$ respectively is defined as the conditional probability of the corresponding true label sequence, i.e, 
\begin{align*}
\mathbb{P}_{t}(\theta^* \in E_{\theta, t, j} | \mathcal{I}_{t} ) = \mathbb{P}_{t}(c^* \hspace{-2pt} \in \hspace{-2pt} I_{c, t, j} | \mathcal{I}_{t} ) \overset{\Delta}{=} 
\bar{\mathbb{P}}_{t}\Big( \hspace{-3pt}\{Y(j)\}^{t}_{j=1} \Big| \mathcal{I}_{t}  \Big ),
\end{align*}
where $\mathcal{I}_{t} =\{\{\zeta(j)\}^{t}_{j=0}, \hspace{-2pt} Y(0), \{\bar{Y}(j)\}^{t}_{j=1}, E_{t}( \{\bar{Y}(j)\}^t_{j=1})\}$.

The objective of the control problem is to steer this distribution to the ``true" distribution of $\theta^*$ quickly. The true distribution of $\theta^*$ is that it belongs to a singleton set with probability $1$. To find this singleton set, the sets $E_{\theta,t,j}$ need to be pruned as $t$ increases and the probability with which $\theta^* \in E_{\theta,t,j}$ needs to increase.  To achieve the same we consider the following control problems. At stage $t$,
\begin{align*}
P3 : \hspace{1cm} &\max_{\Gamma_{t}} \mathbb{E}_{\mathbb{P}_{t}}\Big[|| \zeta(t+1)  - \zeta(t) ||^2 - \varrho || u ||^2  \Big| \mathcal{I}_{t} \Big] \\
\text{s.t} \; \; \zeta(t+1) &= \phi(\zeta(t),u),   \zeta(t+1)  \notin \mathcal{X}^c_{0},    \zeta(t+1)  \notin \mathcal{X}_{0}  \\
\mathbb{E}_{\mathbb{P}_{t}} \Bigg[  \arctan &(\frac{z(t+1)  - z(t)}{x(t+1)-x(t)}) | \mathcal{I}_{t}\Bigg] \notin E_{\theta,t,j}, \; \forall E_{\theta,t,j} \in E_{\theta,t}.
\end{align*}
The objective of the above control problem is move the agents in new directions, directions which are not candidates for the slope of the true line. By exploring the state space, the agent is able to get more pairs of points which are close and are of opposite label. Through this step the control problem is aiding the pruning of the sets $E_{\theta,t,j}$. Consider the following control problem:
\begin{align*}
P4 : \hspace{1cm} &\max_{\Gamma_{t}} \mathbb{E}_{\mathbb{P}_{t}}\Big[|| \zeta(t+1)  - \zeta(t) ||^2 - \varrho || u ||^2  \Big| \mathcal{I}_{t} \Big] \\
\text{s.t} \; \; \zeta(t+1) &= \phi(\zeta(t),u),   \zeta(t+1)  \notin \mathcal{X}^c_{0},    \zeta(t+1)  \notin \mathcal{X}_{0}  \\
\exists  E_{\theta,t,j} \hspace{-2pt} \in  \hspace{-2pt} E_{\theta,t} \; &\text{s.t} \; \mathbb{E}_{\mathbb{P}_{t}} \hspace{-2pt} \Bigg[  \arctan (\frac{z(t+1)  - z(t)}{x(t+1)-x(t)}) | \mathcal{I}_{t}\Bigg] \hspace{-2pt} \in E_{\theta,t,j}.
\end{align*}
The objective of the above control problem  is to ensure that agent visits regions of state space that will enhance its belief about the set to which $\theta^*$ belongs, i.e, increase or reduce the conditional probability with which $\theta^*$ belongs to a set $E_{\theta,t, j}$.  The objective is to reduce the entropy of the distribution $\bar{\mathbb{P}}_{t}\Big( \{Y(j)\}^{t}_{j=1} \Big| \mathcal{I}_{t}  \Big )$. i.e., reduce the uncertainty in the distribution. This is achieved by moving in directions which are possible candidates to the true slope. If the observed data points are as expected, then conditional probability $\theta^*$ belongs to the set of current direction is expected to increase. The solution to this problem could also lead to pruning of the sets to which $\theta^*$ belongs. From Equation \ref{Equation 9}, solving $(P3 / P4)$ invokes $\bar{\mathbb{P}}_{t}(\{\varepsilon(j) \in E_j\}^{t}_{j=1} | \mathcal{I}_{t})$, results in the control policy $\Gamma_{t}(E_{z,t})$, which is then invoked to calculate $\bar{\mathbb{P}}_{t+1}(\{\varepsilon(j) \in E_j\}^{t+1}_{j=1} | \mathcal{I}_{t+1})$. From Equations \ref{Equation 8} and \ref{Equation 9}, we conclude that at any stage $t$, by solving $P3$ and $P4$ we are manipulating $\bar{\mathbb{P}}_{t+1}(\{\varepsilon(j) \in E_j\}^{t+1}_{j=1} | \mathcal{I}_{t+1})$ through optimization of $\mathbb{P}_{t+1} (\zeta(t+1) \in F_{t+1} | \mathcal{I}_{t})$.  $P3$ and $P4$ are stochastic analogs to $P1$ and $P2$ respectively. 
\section{Algorithms}\label{section 4}
In this section, we present control algorithms to solve the problem presented in subsection \ref{subsection 1.2}. We present both deterministic and stochastic scenarios. 
\subsection{Deterministic Scenario}
The control algorithm executed by the agent is presented in Algorithm \ref{Algorithm 1}. In this algorithm, at any given position if the agent has seen a label flip from its past position through execution of $P1$ at its past position, the agent solves $P2$ and  moves to a new position. At the new position, the agent solves $P1$ and moves to new positions until it observers a label flip. The objective of $P1$ is to obtain points which are close to each other but of opposite label; hence executed multiple times until the objective is achieved. The objective of $P2$ is to obtain points which are of the same label but far from each other. It is executed only once as even if the objective is not met, it results in points with opposite labels which are far from each other. At $t=0$, the agent begins by solving $P1$ and repeats the same until a label flip is observed.

Given the data set after $m$ stages, $\{x(j), z(j), y_{j}\}^{m}_{=1}$, the distance between pairs with opposite labels is found. Of all the pairs, two pairs which have the shortest distance between them are chosen. Let the four points be $(p_{1,m}, -1)$, $(p_{2,m},1) $, $(p_{3,m}, 1)$ $(p_{4,m}, -1)$. Consider then quadrilateral formed by $(p_{1,m}, p_{2,m},p_{3,m},p_{4,m})$ and let the diagonals intersect at $p_{m}$. Consider the  angle formed by $p_{1,m}, p_{m}, p_{2,m}$. The line that bisects this angle whose slope is $\rho_m$ and intercept is $c_m$ is declared as the estimate of the true classifier.
\begin{algorithm}
\caption{Control for classification}\label{Algorithm 1}
\begin{algorithmic}[1]
\Procedure{CFC}{}
\State Given $(p_{1},-1), (p_{2}, 1), (p_{3},1), (p_{4},-1)$, $\phi(\cdot)$, and $\mathcal{U}$.
\State $\zeta(0) \gets p_{1}, y(0) \gets -1 $
\State $j \gets 0, Label \gets -1, Counter  \gets 0$
\While {$j \leq m-1$}
\If {$Counter \mod 2 =0$}
\State Solve $P1$ to obtain $U(j)$.
\State Agent moves to $\zeta(j+1)  \gets \phi(\zeta(j), U(j))$
\State $j \gets j+1$, collect observation $Y(j)$. 
\If {$Y(j) \times Label = -1$}
\State $Label \gets Y(j)$
\State $Counter \gets Counter +1$
\EndIf
\Else 
\State Solve $P2$ to obtain $U(j)$.
\State Agent moves to $\zeta(j+1)  \gets \phi(\zeta(j), U(j))$
\State $j \gets j+1$, collect observation $Y(j)$. 
\State $Counter \gets Counter +1$
\EndIf
\EndWhile
\EndProcedure
\end{algorithmic}
\end{algorithm}
\begin{proposition}
As the number of data points increases, estimated classifier converges to the true classifier, i.e., $\underset{m \to \infty} \lim \rho_{m} = \rho^*$ and $\underset{m \to \infty} \lim c_{m} = c^*$.
\end{proposition}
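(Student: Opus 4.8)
The plan is to reduce the claim to two geometric facts about the data set that Algorithm~\ref{Algorithm 1} produces, and then to a stability estimate for the line through two well-separated points. Write $\ell^*=\{(x,z):z=\rho^*x+c^*\}$, $\theta^*=\arctan(\rho^*)$, and let $(p_{1,m},-1),(p_{2,m},1),(p_{3,m},1),(p_{4,m},-1)$ be the four points selected after stage $m$, with midpoints $q_{12,m}=\tfrac12(p_{1,m}+p_{2,m})$ and $q_{34,m}=\tfrac12(p_{3,m}+p_{4,m})$, and set $d_m=\max\{\|p_{1,m}-p_{2,m}\|,\|p_{3,m}-p_{4,m}\|\}$. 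The two facts I would establish are: (A) $d_m\to 0$ as $m\to\infty$; and (B) $\liminf_m\|q_{12,m}-q_{34,m}\|>0$. Granting these, the endgame is short: since $p_{1,m}$ and $p_{2,m}$ carry opposite true labels, $\ell^*$ must cross the segment $[p_{1,m},p_{2,m}]$, so $\mathrm{dist}(q_{12,m},\ell^*)\le\tfrac12\|p_{1,m}-p_{2,m}\|\le\tfrac12 d_m$ and likewise $\mathrm{dist}(q_{34,m},\ell^*)\le\tfrac12 d_m$. Thus $q_{12,m}$ and $q_{34,m}$ are two points within $O(d_m)$ of $\ell^*$ whose separation is bounded below, and the elementary perturbation bound for the line determined by two points shows that $\overline{q_{12,m}q_{34,m}}$ converges, in slope and intercept (uniformly on bounded regions), to $\ell^*$ at rate $O(d_m)$.

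Next I would relate the angle-bisector estimator $(\rho_m,c_m)$ to $\overline{q_{12,m}q_{34,m}}$. As $d_m\to0$ the quadrilateral $p_{1,m}p_{2,m}p_{3,m}p_{4,m}$ degenerates into a thin sliver: the clusters $\{p_{1,m},p_{2,m}\}$ and $\{p_{3,m},p_{4,m}\}$ each shrink to a point while, by (B), their centres stay a fixed distance apart. Consequently the diagonals $p_{1,m}p_{3,m}$ and $p_{2,m}p_{4,m}$ lie within $O(d_m)$ of $\overline{q_{12,m}q_{34,m}}$, their intersection $p_m$ lies within $O(d_m)$ of $\ell^*$ and, using (B), at distance $\Theta(1)$ from each cluster; hence the rays $p_mp_{1,m}$ and $p_mp_{2,m}$ both point toward $q_{12,m}$ with angular spread $O(d_m)$, the bisected angle $\angle p_{1,m}p_mp_{2,m}$ is $O(d_m)$, and its bisector direction agrees with that of $\overline{q_{12,m}q_{34,m}}$ up to $O(d_m)$. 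Therefore the bisector line is within $O(d_m)$ of $\overline{q_{12,m}q_{34,m}}$, and combining with the previous paragraph gives $\lim_m\rho_m=\rho^*$ and $\lim_m c_m=c^*$.

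For (A), I would exploit the monotone structure discussed in Section~\ref{section 3}. Each completed execution of $P1$ terminates in a label flip, i.e., in two consecutive positions $\zeta(j-1),\zeta(j)$ on opposite sides of $\ell^*$; both are feasible for $P1$, hence both lie in the region of uncertainty $\mathcal{R}_t=\mathbb{R}^2\setminus(\mathcal{X}_t\cup\mathcal{X}_t^c)$, and the step direction satisfies $\arctan\big((z(j)-z(j-1))/(x(j)-x(j-1))\big)\notin E_{\rho,t}\ni\theta^*$, so it is transverse to $\ell^*$ with an angle bounded away from $0$. As discussed around Figures~\ref{Figure 3}--\ref{Figure 4}, each newly acquired pair of nearby opposite-label points contributes new bounding lines, enlarges the regions of certainty, and shrinks $\mathcal{R}_t$ in the relevant neighbourhood; the resulting nested sequence $\{\mathcal{R}_t\}$ satisfies $\bigcap_t\mathcal{R}_t=\ell^*$ on the portion of the plane visited infinitely often. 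Hence near such a region $\mathcal{R}_t$ becomes a strip of width $w_t\to0$ about $\ell^*$; a straddling pair lies in that strip on opposite sides of $\ell^*$, and transversality gives $\|\zeta(j)-\zeta(j-1)\|\le w_t/\sin\alpha\to0$. Selecting the two closest opposite-label pairs after stage $m$ then forces $d_m\to0$.

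The part I expect to be the main obstacle is precisely the combination of ingredients in the previous paragraph: proving that $\bigcap_t\mathcal{R}_t$ really collapses to $\ell^*$ on the explored region (which needs the monotone growth of the regions of certainty together with a non-degeneracy argument that the selected pairs keep spanning a genuine quadrilateral, the same argument that underlies (B)), and proving that Algorithm~\ref{Algorithm 1} does not stall, i.e., that $P1$ stays feasible and yields infinitely many crossings while the intervening $P2$ steps, being displacement-maximizing, prevent the two clusters from ever merging. Here the standing assumption that the four initial points are ``far apart'' and the far-reaching nature of $P2$ are what must be converted into the quantitative lower bound in (B).
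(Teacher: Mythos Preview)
Your approach follows the same outline as the paper's: use the monotone growth of the certainty regions $\mathcal{X}_t,\mathcal{X}_t^c$ to produce opposite-label pairs with vanishing separation, then argue geometrically that the selected quadrilateral collapses onto $\ell^*$. The paper's version is terser: it asserts $\lim_m\mathcal{X}_m=\mathcal{X}$ directly from strict monotonicity and $\mathcal{X}_m\subset\mathcal{X}$ (which alone does not force the union to equal $\mathcal{X}$), uses radial coordinates to say a close pair ``collapses to a point on the true classifier,'' and then states that two such pairs at mutual distance $>\delta$ yield $|\rho_\epsilon-\rho^*|<\epsilon$, $|c_\epsilon-c^*|<\epsilon$ without further detail. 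Your decomposition into facts (A) and (B), the midpoint/perturbation argument for the endgame, the explicit degeneration of the angle-bisector to the line $\overline{q_{12,m}q_{34,m}}$, and the use of transversality via the constraint $\arctan(\cdot)\notin E_{\rho,t}$ all supply rigour the paper omits; and the obstacle you flag---that $\bigcap_t\mathcal{R}_t=\ell^*$ on the explored region, non-stalling of $P1$, and the separation bound (B) from $P2$ require genuine work---is precisely the step the paper takes for granted.
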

\begin{proof}
From the constraints of the optimization problems $(P1)$ and $(P2)$,  we note that a given point in the state space is not visited twice. This is because once the agent visits a point and the true label at that point is known, that point becomes a part of the region of certainty associated with that label. Further, a point in the region of certainty at stage $m$ is not visited by the agent at any time $n$, $n \geq m+1$. Hence, a given point is not visited twice. From earlier arguments, we note that the region of certainty for both labels is a strictly monotonic sequence of sets, $\mathcal{X}_{m} \subset \mathcal{X}_{m+1}$ and $\mathcal{X}^c_{m} \subset \mathcal{X}^c_{m+1}, \; \forall m$. Since $\mathcal{X}_{m} \subset \mathcal{X}$ and $\mathcal{X}^c_{m} \subset \mathcal{X}^c$ for all $m$, $\underset{ m \to \infty}\lim \mathcal{X}_{m} = \mathcal{X}$ and $\underset{ m \to \infty} \lim \mathcal{X}^c_{m} = \mathcal{X}^c$. The monotonicity of sequences $\{\mathcal{X}_{t}\}$ and $\{\mathcal{X}^c_{t}\}$ implies that for any $\epsilon>0$ there exists $p_{1, \epsilon}$, $p_{2, \epsilon}$ such $d(p_{1, \epsilon}, p_{2, \epsilon}) < \epsilon$ with opposite labels. Thus, there exists a sequence of pairs of points, $\{p_{1,n}, p_{2,n}\}$, such that $label(p_{1,n}) \times label(p_{2,n})  =-1 $ and the distance between the points, $d(p_{1,n}, p_{2, n})$ converges to zero. 
\begin{figure}
\begin{center}
\includegraphics[width=\columnwidth]{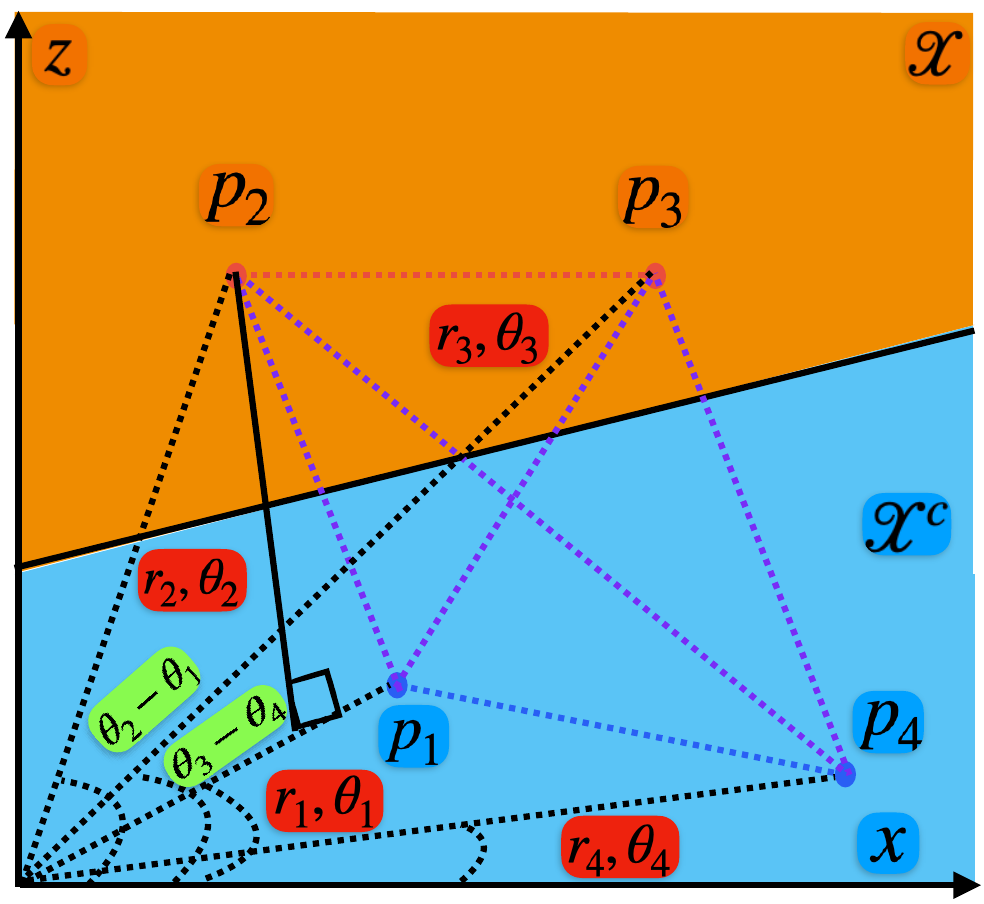}
\caption{Analysis of the problem using analytical geometry} 
\label{Figure 6}
\end{center}
\vspace{-0.9cm}
\end{figure}
Invoking the radial co-ordinates of the points as indicated in Figure \ref{Figure 6}, we note that 
\begin{align*}
&d(p_{1, \epsilon}, p_{2, \epsilon}) = \\
&\sqrt{(r_{2}\cos(\theta_2) - r_{1}\cos(\theta_1))^2 +(r_{2}\sin(\theta_2) - r_{1}\sin(\theta_1))^2}
\end{align*}
Hence as $d(p_{1, \epsilon}, p_{2, \epsilon})$ converges to $0$, $\theta_{2} - \theta_1 \to 0 $ and $r_{2} - r_{1} \to 0$, i.e., the points collapse to a point on the true classifier. Hence, the control algorithm generates points which are arbitrarily close to the true classifier. By considering 2 pairs, $p_{1, \epsilon}, p_{2, \epsilon}$ and $p_{3, \epsilon}, p_{4, \epsilon}$ such that $d(p_{1, \epsilon}, p_{2, \epsilon}) < \epsilon$ and $d(p_{3, \epsilon}, p_{4, \epsilon}) < \epsilon$  while $d(p_{1, \epsilon}, p_{4, \epsilon}) > \delta$ or $d(p_{2, \epsilon}, p_{3, \epsilon}) > \delta$, the estimated classifier obtained is then arbitrarily close to true classifier in the sense of parameter norm, $| \rho_{\epsilon} - \rho^*| < \epsilon, | c_{\epsilon} - c^*| < \epsilon$. Letting $\epsilon \to 0$, we obtain the result of the proposition. 
\end{proof}
\subsection{Stochastic Scenario}
The control algorithm  for the stochastic scenario is described in Algorithm \ref{Algorithm 2}. The agent solves $P1$ at all even time steps (including $0$) and $P2$ at all odd time steps. The most likely action from the resulting conditional distributions, $\Gamma_{j}$, is chosen at every time step.  After $m$ stages, the finite collections of sets $\{E_{\theta,m, k}\}$ and$\{I_{c,m,k}\}$ and their associated conditional probabilities $\bar{\mathbb{P}}_{t} (\{\varepsilon(k) \in E_k\}^j_{k=1} | \mathcal{I}_{j} ) $ are considered. The outcome of the algorithm are the sets $E_{\theta,m, k^*}$ and $I_{c,m, k^*}$ which have the highest conditional probability, $\bar{\mathbb{P}}_{m} (\{\varepsilon(k) = e^*_k\}^m_{k=1} | \mathcal{I}_{m} ) $, i.e, $\theta^* \in E_{\theta,m, k^*}, c^* \in I_{c,m, k^*}$ with conditional probability $\bar{\mathbb{P}}_{m} (\{\varepsilon(k) = e^*_k\}^m_{k=1} | \mathcal{I}_{m} ) $. 
\begin{algorithm}
\caption{Stochastic Control for classification}\label{Algorithm 2}
\begin{algorithmic}[1]
\Procedure{SCC}{}
\State Given $(p_{1},-1), (p_{2}, 1), (p_{3},1), (p_{4},-1)$, $\phi(\cdot)$, and $\mathcal{U}$.
\State Given $p \in [0,1]$, $\zeta(0) \gets p_{1}, y(0) \gets -1 $, $j \gets 0$
\While {$j \leq m-1$}
\If {$j \mod 2 =0$}
\State Solve $P1$ to obtain $\Gamma_j$.
\Else 
\State Solve $P2$ to obtain $\Gamma_j$.
\EndIf
\State $U(j) =\underset {u \in \mathcal{U}} \argmax \;\; \Gamma_{j}[E_{j,z}](u)$.
\State Agent moves to $\zeta(j+1)  \gets \phi(\zeta(j), U(j))$
\State $j \gets j+1$, collect observation $Y(j)$
\State Update $\{E_{\theta,j, k}\}$, $\{I_{c,j, k}\}$
\State Update $\bar{\mathbb{P}}_{j} (\{\varepsilon(k) \in E_k\}^j_{k=1} | \mathcal{I}_{j} ) $
\EndWhile
\EndProcedure
\end{algorithmic}
\end{algorithm}
Further investigation is needed to prove the convergence of the above algorithm to the true classifier. 
\section{Examples}\label{section 5}
\begin{figure}
\begin{center}
\includegraphics[width=\columnwidth]{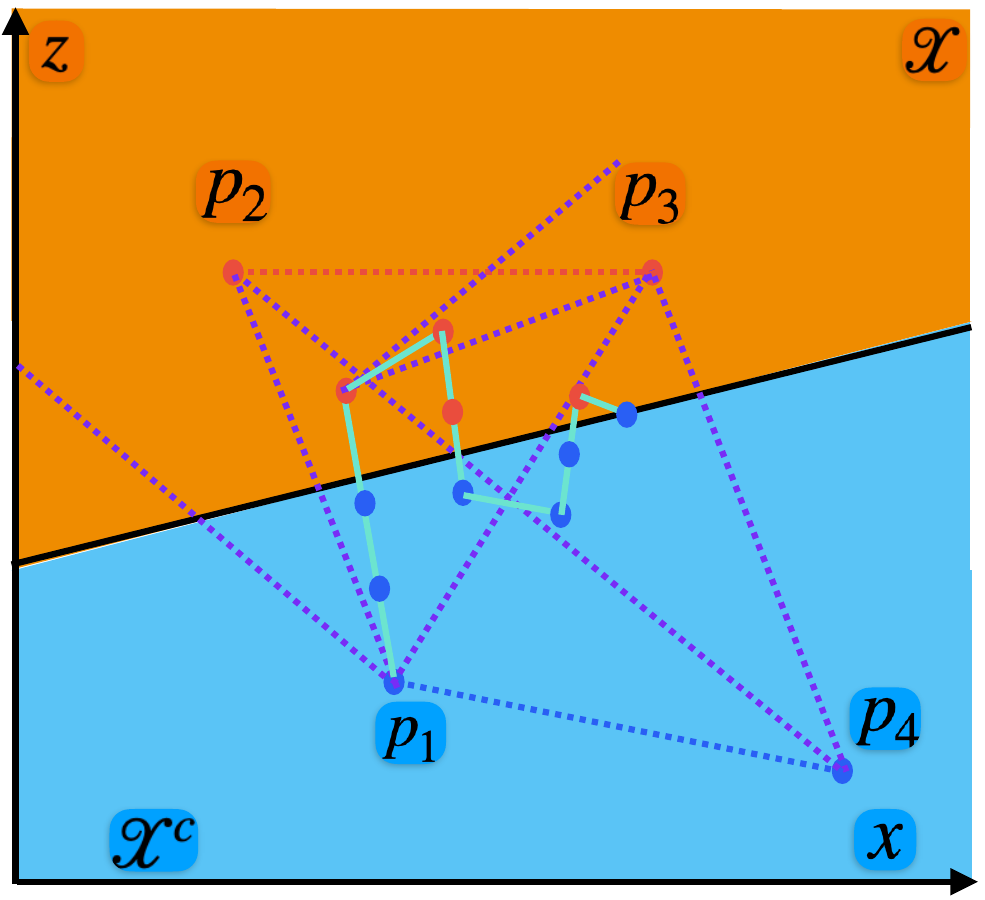}
\caption{Path of the Agent with unicycle model and noiseless observations} 
\label{Figure 7}
\end{center}
\vspace{-0.9cm}
\end{figure}
In this section, we present an example illustrating the implementation of each of the control algorithms described in the previous section. We consider a $20m \times 20m$ region in $\mathbb{R}^2$. We are given four initial points with their true labels as indicated in Figure \ref{Figure 7}.  We consider a unicycle model for the agent:
\begin{align*}
&x(t+1) = x(t) + v(t) \cos(\theta(t+1)),\\
&z(t+1) = z(t) + v(t)\sin(\theta(t+1)),\\
&\theta(t+1) = \theta(t)  + w(t), \; \text{where} \; U =[v,w]. 
\end{align*}
In the above model, each time step is a two step processes. First, $\theta$ gets updated using the angular velocity. Following this step, once the direction of travel gets fixed, the positions get updated using velocity.  With this model problem $P1$ (and similarly $P2$) gets modified to, 
\begin{align*}
P1 : \;\; &\max_{u \in \mathcal{U}} || v(t) ||^2 - \varrho ( || v(t) ||^2 + || w(t) ||^2 ) \\
&\text{s.t} \; \; \zeta(t+1) = \phi(\zeta(t),u),   \zeta(t+1)  \notin \mathcal{X}^c_{t}, \\   
&\zeta(t+1)  \notin \mathcal{X}_{t},  \; \theta (t+1) \notin E_{\rho,t}
\end{align*}
We consider $v(\cdot) \in \{0, 0.1, 0.2, \ldots, 2\}$ with unit, m/ unit time. We consider $w(\cdot) \in \{-\frac{\pi}{2}, \ldots,-0.01, 0.0, 0.01, \ldots, \frac{\pi}{2}\}$ with unit, rad/ unit time. $\varrho$ was set to $0.1$, thus enabling the agent to utilize higher values of $w$ resulting in quick change of orientation. With this setup, Algorithm \ref{Algorithm 1} was run for $10$ steps. The path followed by the agent is indicated in Figure \ref{Figure 7}. The $10$ data points gathered were utilized to estimate the classifier. The parameters corresponding to the estimated classifier were $\rho_{10} = 0.38 / \theta_{10} = 0.36 \; rad \; = 20.8 ^{\circ} \deg$ and $c_{10} = 3.6$ while corresponding values  of the true classifier where $\rho^* = 0.41 / \theta^* = 0.389 \; rad \; = 22.29 ^{\circ} \deg$ and $c_{10} = 3.5$. Thus, in this example the classifier was estimated with high accuracy. 
With the same setup, simulations were run in the stochastic setting with the probability of not flipping equal to $0.7$. The  simulation was run for $10$ steps. The trajectory followed by the agent has been plotted in Figure \ref{Figure 8}. The outcome of running Algorithm \ref{Algorithm 2} was that, it was estimated that $\theta^* \in [18.42, 25.81]^{\circ} \deg$ and $c^* \in [3.12, 3.76]$ with probability $0.84$. The estimated region of the state space to which the true classifier belongs  after $10$ steps has been marked in Figure \ref{Figure 8}. The region is bounded by lines $l_{1}$  and $l_{2}$. Further investigation is needed, to prove that this region can be shrunk to the true classifier with high probability.
\begin{figure}
\begin{center}
\includegraphics[width=\columnwidth]{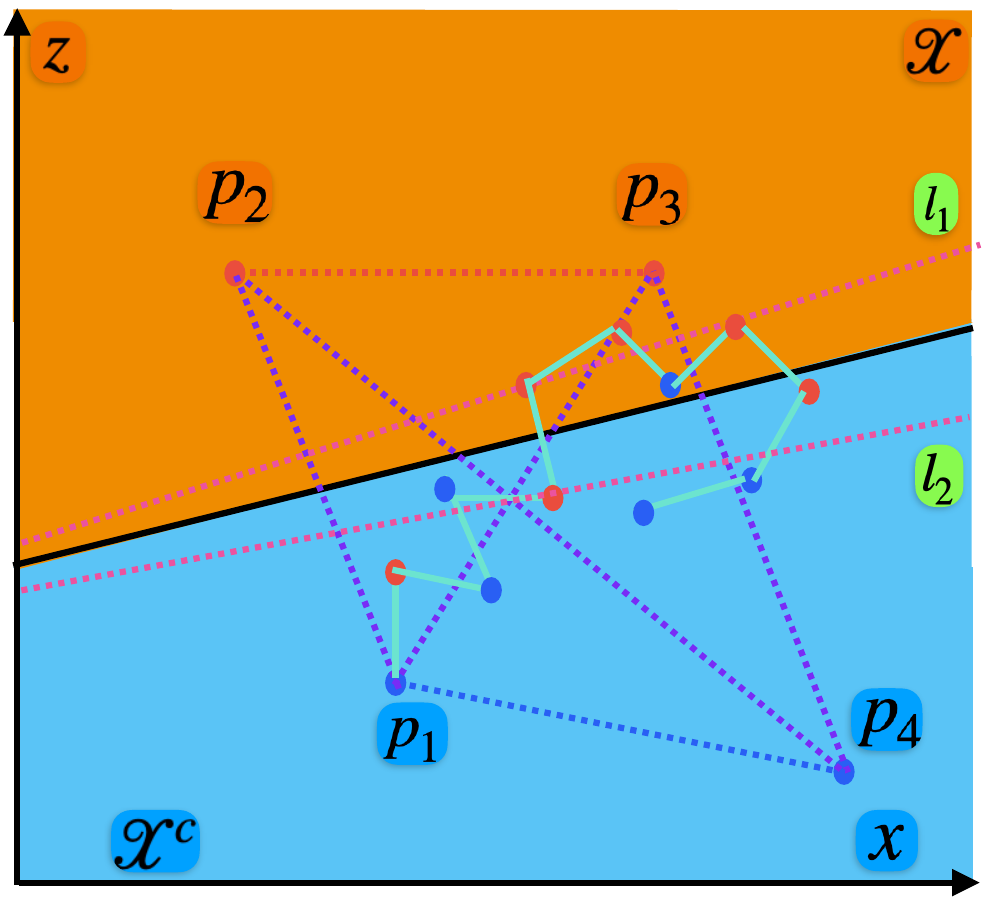}
\caption{Path of the Agent with unicycle model and noisy observations} 
\label{Figure 8}
\end{center}
\vspace{-0.9cm}
\end{figure}
\section{Conclusion and Future Work}\label{section 6}
To summarize, we considered the problem of identification of a linear classifier by an agent with noiseless and noisy data. We presented geometric interpretation of the problem which was then utilized to develop efficient control algorithms. Data obtained as a result of the control algorithms was used to identify the classifier. When the data is noiseless, we prove the convergence of the estimated classifier to the true classifier. When the data was noisy, the identification process resulted in sets to which the parameters of the true classifier belongs with high probability. 

As future work, we are interested in understanding the control problems which when analyzed using dynamic programming would result in value functions which are obtained through optimization problems similar to the one step control problems formulated in this paper. We would like to investigate the convergence of the estimated classifier to the true classifier in the stochastic case in suitable topology. Formal connections to dual control and adaptive sampling are to be established.
\bibliographystyle{IEEEtran}
\bibliography{biblio}
\end{document}